\providecommand{\algorithmname}{Algorithm}
\newcommand{\suppress}[1]{}
\newcommand\Perms[2]{\tensor[^{#2}]P{_{#1}}}
\def \g {\gamma}
\def \F {{\mathbb F}}
\def \Z {{\mathbb Z}}
\def \deg {{\rm deg}}
\def \Ga{{\alpha}}
\def \fR {\mathfrak{R}}
\def \fQ {\mathfrak{Q}}
\def \mL {\mathcal{L}}
\def\ba{{\bf a}}
\def\bv{{\bf v}}
\begin{document}

\title{On the Closest Vector Problem for Lattices Constructed from Polynomials and Their Cryptographic Applications}

\author{Zhe Li\inst{1}, San Ling\inst{1}, Chaoping Xing\inst{1}, Sze Ling Yeo\inst{2}}
\institute{School of Physical and Mathematical Sciences, Nanyang Technological University
\and Institute for Infocomm Research (I2R), Singapore}
\maketitle

\begin{abstract}
In this paper, we propose new classes of trapdoor functions to solve the closest vector problem in lattices. Specifically, we construct lattices based on properties of polynomials for which the closest vector problem is hard to solve unless some trapdoor information is revealed. We thoroughly analyze the security of our proposed functions using state-of-the-art attacks and results on lattice reductions. Finally, we describe how our functions can be used to design quantum-safe encryption schemes with reasonable public key sizes. In particular, our scheme can offer around $106$ bits of security with a public key size of around $6.4$ \texttt{KB}. Our encryption schemes are efficient with respect to key generation, encryption and decryption.
\end{abstract}

\section{Introduction}

In today's digital world, protecting the confidentiality and integrity of digital information is of vital importance. At the core of providing data privacy, integrity and authenticity are a class of algorithms called public-key cryptosystems, first introduced by Diffie and Hellman in 1976 \cite{DH76}. Essentially, these public-key cryptosystems are constructed from trapdoor functions. Recall that a trapdoor function $f$ is a function satisfying:
\begin{itemize}
\item $f(x)$ is easy to evaluate for all inputs $x$;
\item Given an output $y$ of the function $f$, it is computationally infeasible to determine $x$ such that $y=f(x)$ unless some trapdoor information is known.
\end{itemize}
To date, most commonly deployed trapdoor functions rely on some computational number theory problems where no efficient classical algorithm is known, including the integer factorization problem and discrete logarithm problem in various finite groups. However, Shor showed in 1994 that there exists a quantum algorithm that can solve these problems in polynomial time \cite{S94}.

As such, there is an urgent need to design new trapdoor functions based on different mathematical problems that are resistant to quantum algorithms. At present, a number of potential classes of mathematical problems are being considered and studied, namely, from coding theory, lattices, multi-variate polynomials, hash functions and isogenies of supersingular elliptic curves \cite{Ber08}. Among them, lattices seem to be among the most promising, spawning many new constructions with different properties and capabilities, most notably fully homomorphic encryption \cite{G09}.
\subsection{Previous work}
Early lattice-based encryption schemes include the Ajtai-Dwork Encryption \cite{AjtaiD97}, Goldreich-Goldwasser-Halevi (GGH) encryption \cite{GGH97} and NTRU encryption \cite{Hof98}. A breakthrough of modern lattice-based cryptography is the invention of the learning with error (LWE) and Ring-LWE problems \cite{Regev09,LPR13}. Consequently several LWE-based encryption schemes have been proposed \cite{BV11,BV14,GHV10,GVW15}.

The GGH encryption scheme is an analog of the famous coding-based encryption scheme--McEliece encryption.
In the original paper \cite{GGH97}, 5 different challenges for lattice dimensions ranging from $n=200$ to $n=400$ were proposed. Unfortunately, all the challenges, except for the instance with $n=400$, were broken. Indeed, it was shown in \cite{Ngu99} that the structure of the error provided an inherent weakness and together with the embedding technique, this weakness can be exploited to attack the GGH instances. Even though suggestions were put forth in \cite{Ngu99} to mend the scheme, the corresponding parameters will make the scheme impractical for use.
\subsection{Our work}
In this paper, we seek to design new trapdoor functions in which the function inversion involves solving the closest vector problem (CVP), one of the well-known hard lattice problems. Our construction is primarily inspired by the GGH construction \cite{GGH97} and the McEliece code-based cryptosystem \cite{M78}. Like these schemes, our function involves constructing a point that is sufficiently close to a certain point in a lattice determined by the input. Hence, inverting this function will require one to solve the closest vector problem (CVP).

More precisely, we choose $n$ distinct elements $\Ga_1,\Ga_2,\dots,\Ga_n$ of $\F_q$ and $t$ distinct monic irreducible polynomials $c_1(x),c_2(x),\dots,c_t(x)$ of degree $d_0$ such that ~$\gcd ( \prod_{i=1}^n(x-\Ga_i), \prod_{i=1}^tc_i(x) ) = 1$~. An integer point $(a_1,a_2,\dots,a_n)\in \mathbb{Z}^n$ is a lattice point if and only if $\prod_{i=1}^n(x-\Ga_i)^{a_i}\equiv1\pmod{c(x)}$, where $c(x)=\prod_{i=1}^tc_i(x)$. Then a basis of this lattice can be computed efficiently. We show in this paper that, given $q,n$ and certain range of $d=td_0$, and the basis of this lattice, the embedding technique does not work well to tackle the CVP of this lattice. On the other hand, with information on $\{\Ga_1,\Ga_2,\dots,\Ga_n\}$ and the polynomial set $\{c_1(x),c_2(x),\dots,c_t(x)\}$, we are able to efficiently solve the CVP of this lattice. With this trapdoor, we can design an encryption algorithm that is similar to the one in the GGH encryption scheme.

Furthermore, we conduct a thorough security analysis on our trapdoor function. We show that we can design an encryption scheme based on our trapdoor function that is resistant against existing attacks with public key sizes smaller than those proposed in \cite{GGH97}. A practical encryption based on this trapdoor is also provided.

For a given security parameter $\lambda$, choose $n$ and $d$ satisfying $n \geq 200, d < n/2, \sqrt{\frac{n}{2\pi e(d-1)}} (n+2d)^{d/n} \le 0.3*1.007^n, \lambda \leq 36.4\log_2{n}$, $q=\texttt{next\_prime}(n, d)$ and ${n -d \choose l} \ge 2^\lambda$, where $l = \frac{(n-d)(d-1)}{n}$. For a concrete parameter selection, refer to Section \ref{sec:parameter_section}.

\subsection{Comparison}
Although the encryption algorithm in our scheme is similar to the one in the GGH encryption scheme, the trapdoor function in our scheme is totally different from that in the GGH encryption scheme. For the GGH scheme, one first chooses a ``nice" basis of a lattice so that solving CVP is easy, and then multiplies with a unimodular matrix and permutation matrix for confusion so that solving CVP is no longer easy.
However, unlike the GGH cryptosystem, we do not rely on constructing a good basis as a trapdoor. Instead, by using lattices constructed from polynomial functions, one can invert the function efficiently as long as one has access to the polynomials and points involved. As such, the vector norms of our basis can be better controlled, one of the main limitations of the GGH scheme.

In \cite{Ngu99}, attacks were proposed which essentially rendered the GGH cryptosystem insecure for practical parameters (it is still not broken asymptotically). However, our experiments show that the trapdoor in this paper is resistant to the existing attacks including the attack given in \cite{Ngu99}. Apart from security advantage, the public key size of our encryption scheme is smaller compared with the GGH encryption scheme. The encryption and decryption complexity is almost the same as for the GGH encryption scheme. The comparison between the GGH scheme and our polynomial lattice scheme is given in Table \ref{tab:1}.

Note that $n$ in Table \ref{tab:1} represents the rank of lattices. For time complexity computation, we assume that multiplication of two integers less than $n$ requires $O(\log n\log\log n)$ bit operations and multiplication of two degree-$d$ polynomials over $\F_q$ with $d\le q$ requires $O(d\log d)$ field element multiplications.
\newpage
\begin{table}[h!]
\centering
\caption{Comparison with GGH }
\begin{tabular}{|c|c|c|}
\hline
&GGH &Polynomial lattice scheme\\ \hline
Public key size    &$n^2$ & ${n^2}/5\sim{n^2}/4$  \\ \hline
Encryption time          &$O(n^2 \log n \log \log n)$&$O(n^2 \log n \log \log n)$\\ \hline
Decryption time        &$O(n^2 \log n \log \log n)$&$O(n^2 (\log n)^2 \log \log n)$  \\ \hline
Resistant to embedding attack& No& Yes \\ \hline
Public key entry bit & $0.3n \tablefootnote{For each $n \in \{100, \ldots, 400\}$, we computed the mean value of the entry of GGH public key by repeating experiments $100$ times. Then applying linear congression tool, we get public key entry bit of GGH approximately equal to $0.3n$.}$ & $\log_2{(2n)}$ \\ \hline
\end{tabular}
\label{tab:1}
\end{table}

\subsection{Organization of the paper}

This paper is organized as follows. In the next section, we briefly summarize some important background on lattices as well as the two encryption schemes (namely, GGH and McEliece schemes) that inspire our work. In Section 3, we describe a family of lattices constructed from polynomials. We then present our new trapdoor functions based on these lattices in Section 4. This is followed by a security analysis on these trapdoor functions in Section 5. In Section 6, we give details of a semantically secure encryption scheme based on our trapdoor functions. In addition, we propose some possible parameters for our scheme. 

\section{Preliminaries}

\subsection{Background on Lattices} \label{sec:background}

In this section, we briefly review some of the important definitions and notions on lattices. We refer the reader to \cite{MG02,Ber08} for more background materials.

Let $n$ be a positive integer. By usual convention, we will represent vectors in $\mathbb{R}^n$ in the row form. For $\vec{x} = (x_1, \ldots, x_n) \in \mathbb{R}^n$, denote by $\vert\vec{x}\vert$ the Euclidean norm of $\vec{x}$, that is, $\vert\vec{x}\vert = \sqrt{\sum_{i=1}^nx_i^2}$.

{\bf Lattice:}
A lattice $L$ is a discrete additive subgroup of $\mathbb{R}^n$. Concretely, for $m \le n$, let $\{\vec{b}_1, \ldots, \vec{b}_m\}$ be $m$ linearly independent vectors in $\mathbb{R}^n$. Then a lattice $L$ is a set $\{a_1\vec{b}_1 + a_2\vec{b}_2 + \ldots +a_m\vec{b}_m: a_i \in \mathbb{Z}, i = 1, 2, \ldots, m\}$. $m$ is called the {\em dimension} or {\em rank} of the lattice. If $m=n$, then $L$ is said to have full rank. In this work, we will only focus on full-rank lattices. Further, $B = \{\vec{b}_1, \ldots, \vec{b}_m\}$ is called a {\em basis} of $L$. Let $M$ be the $m$ by $n$ matrix with rows $\vec{b}_i$, $i=1, 2, \ldots, m$ . Then, the {\em determinant} of $L$ (or the volume of $L$) is given by ${\rm det}(L) = {\rm vol}(L) = \sqrt{\vert MM^T \vert}$.

{\bf $n$-Ball:}
For $r \in \mathbb{R}$, let $B_n(r) = \{\vec{x} \in \mathbb{R}^n: \vert \vec{x}\vert \le r\}$ denote the $n$-dimensional ball centered around the origin with radius $r$. The volume of $B_n(r)$ is given by $V_n(r) = \frac{\pi^{n/2}r^n}{\Gamma(n/2+1)}$, where $\Gamma(\cdot)$ is the Gamma function.

{\bf Short vectors of a lattice:}
As a lattice is a discrete subgroup of $\mathbb{R}^n$, the set of all their Euclidean norms forms a discrete subgroup of $\mathbb{R}$. Hence, each lattice $L$ has a nonzero point such that its norm is the minimum. We denote this minimum norm by $\lambda_1(L)$. More generally, for $i=1, 2, \ldots, n$, $\lambda_i(L)$ denotes the smallest radius $r$ such that the ball $B_n(r)$ contains $i$ linearly independent points in $L$.

{\bf Gaussian heuristic:}
The Gaussian Heuristic estimates the number of lattice points in certain sets. Let $L$ and $S$ be a full-rank lattice and a connected $n$-dimensional object, respectively. Then the number of lattice points in $S$ is approximated by ${\rm vol}(S)/{\rm det}(L)$. This leads to the following Gaussian heuristic estimate on the shortest vector $\lambda_1(L)$ for a random lattice  $L$:
$\lambda_1(L) \approx V_n(1)^{-1/n}{\rm det}(L)^{1/n} \approx \sqrt{n/2\pi e}{\rm det}(L)^{1/n}$.

{\bf Lattice reduction:}
Any lattice has an infinite number of bases. In particular, given a basis of a lattice $L$, one can construct a new basis by multiplying the matrix formed by the basis vectors with unimodular integer matrices, that is, integer matrices with determinant $\pm 1$. In general, one often looks for a basis with short vectors or nearly orthogonal vectors. There are various algorithms to reduce a basis of a lattice into a basis of better quality. Well-known reduction algorithms include the LLL algorithm \cite{LLL82} and the BKZ algorithm \cite{Sch94}. In the BKZ algorithm, one essentially tries to find short vectors in the sub-lattice formed by sub-blocks of basis vectors. In fact, the LLL algorithm can be viewed as a special case of the BKZ algorithm where we work with pairs of vectors each time. Evidently, a BKZ algorithm with a bigger block size produces a basis with shorter vectors but this is achieved at the expense of a longer running time. Recent efficient implementations of the BKZ algorithm include the BKZ2.0 algorithm \cite{CN11} (where pruning was used to find the shortest vector for each sub-lattice) and the progressive BKZ algorithm \cite{Aono16} (where block sizes are progressively increased). When attempting to solve some lattice problems, one typically reduces the given basis using an appropriate lattice reduction algorithm before applying other algorithms.

{\bf Hermite factor:}
Let $\vec{b}_1$ denote the shortest vector in a given basis of a lattice $L$. To measure the quality of the basis, one often looks at the Hermite factor. The {\em Hermite factor}, denoted by $\delta^n$, is defined as $\delta^n = \frac{\vert\vec{b}_1\vert}{{\rm det}(L)^{1/n}}$. Typically, a smaller Hermite factor implies a better basis with a shorter vector. One may also simply consider the root Hermite factor $\delta$. For a random input basis, it was experimentally shown that the value of $\delta$ is determined by the particular reduction algorithm used \cite{GN08} and independent of the dimension of the lattice. In other words, one may use the root Hermite factor as a measure of the effectiveness of a reduction algorithm on random lattices.

{\bf Shortest Vector Problem (SVP):}
Given a lattice $L$, the {\em shortest vector problem} (SVP) seeks a nonzero point $\vec{v}$ in $L$ such that $\vert \vec{v}\vert = \lambda_1(L)$. For low dimensions, some proposed approaches to solve SVP include computing the Voronoi cell of the lattice and sieving (see \cite{HPS11} for details) as well as enumeration methods \cite{Ka83,GNR10}. However, these methods have complexity at best exponential in the lattice dimension $n$. As such, it becomes computationally infeasible to solve SVP when $n$ is large, say greater than $100$. Variants of the SVP have been proposed and extensively employed in lattice-based cryptography. Some of the main ones include:

\begin{itemize}
\item {\em Hermite SVP}: Find $\vec{v} \in L$ such that $\vert\vec{v}\vert \le \alpha {\rm det}(L)^{1/n}$ for some $\alpha$;
 \item {\em Unique SVP}: Given that $\lambda_2(L)/\lambda_1(L) \ge \gamma$, solve SVP in $L$.
\end{itemize}

In fact, the LLL algorithm \cite{LLL82} solves the Hermite SVP in polynomial-time for $\alpha$ exponential in $n$. It has been experimentally shown in \cite{GN08} that $\alpha \approx \delta^n$ where the root Hermite factor $\delta \approx 1.0219$. Similarly, the expected root Hermite factor and the corresponding time complexities for the BKZ algorithm with different block sizes were reported in \cite{GN08,CN11}.

In \cite{GN08}, Unique SVP was also studied. The authors demonstrated that if $\gamma = \epsilon\delta^n$ for some small constant $\epsilon$, then Unique SVP can be solved. It follows that when the ratio of the norms of the two shortest vectors in $L$ is at least $\epsilon\delta^n$, using the reduction algorithm with root Hermite factor $\delta$ will likely recover the shortest vector of $L$.

{\bf Closest Vector Problem/Bounded Distance Decoding (CVP/BDD): }
A closely related problem to SVP is the {\em closest vector problem} (CVP) or the {\em bounded distance decoding problem} (BDD). Given a target vector $\vec{t} \in \mathbb{R}^n$, CVP seeks a vector $\vec{v} \in L$ that minimizes $\vert\vec{t}-\vec{v}\vert$. In \cite{GMSS99}, the embedding technique was proposed to convert the CVP into SVP. Essentially, suppose that we have a target vector $\vec{t}$ that is close to a lattice generated by a basis $B = \{\vec{b}_1, \ldots, \vec{b}_n\}$. We construct another lattice $L' \subset \Z^{n+1}$ generated by the following matrix:
$$B'=\left(\begin{matrix}\vec{b}_1 & 0\\
\vec{b}_2&0\\
\vdots & \vdots\\
\vec{b}_n&0\\
\vec{t}&1
\end{matrix}\right).$$

It is easy to check that $L'$ has shortest norm given by $\lambda_1(L')^2 = \vert \vec{t}-\vec{v}\vert^2+1$, where $\vec{v}$ is a point in $L$ closest to $\vec{t}$. Hence, if $\vec{t}$ is close enough to $L$, this gives a corresponding short vector in $L'$.

A more direct approach to solve CVP is via Babai's nearest plane algorithm \cite{Bab85}. Let $B = \{\vec{b}_1, \ldots, \vec{b}_n\}$ be a reduced basis of $L$. Write $\vec{v} = v_1\vec{b}_1 + \ldots + v_m\vec{b}_m$. Essentially, Babai's nearest plane algorithm  progressively finds $v_m, v_{m-1}, \ldots$ that minimizes the projected vector $\pi_k(\vec{v}-\vec{t})$ as $k$ goes from $n$ to $k=1$. It can be shown that Babai's algorithm can output the correct result when the error $\vec{e} = \vec{t}-\vec{v}$ satisfies $\vec{e} \in P_{1/2}(B)$, where $P_{1/2}(B) = \{a_1\vec{b}_1 +\ldots +a_n\vec{b}_n: -1/2 \le a_i \le 1/2 \textrm{ for all } i = 1, \ldots, n\}$. Alternatively, one must have $\vert \langle\vec{e}, \vec{b}_i^*\rangle\vert \le \vert\vec{b}_i^*\vert^2/2$, where $\vert \langle\vec{e}, \vec{b}_i^*\rangle\vert$ is the absolute value of inner product of $\vec{e}$ and $\vec{b}_i^*$ and $\{\vec{b}_1^*, \ldots, \vec{b}_n^*\}$ is the Gram-Schmidt orthogonalized basis of $B$. In particular, the error norm must be relatively small with respect to the norms of the orthogonalized basis vectors.

Other improved methods to solve CVP/BDD include Lindner and Peikert's generalization of the Babai's nearest plane algorithm \cite{LP11} as well as enumeration approaches \cite{Ka83,Sch94,GNR10}. Lattice enumeration was proposed in \cite{Ka83} to solve SVP but can be easily adapted to solve BDD. Suppose that we know that $\vert \vec{v} - \vec{t}\vert \le R$ for some enumeration radius $R$. The enumeration algorithm is a generalization of Babai's nearest plane algorithm in the following sense. As the level number $k$ goes from $m$ down to $1$, one finds all the integers $v_k$ such that $\vert \pi_k(\vec{v}-\vec{t})\vert \le R$. In this way, we construct an enumeration tree where the leaves are the error vectors $\vec{v}-\vec{t}$ and the parent of each node at level $k$ is its projection onto $\pi_{k+1}(L)$.

In \cite{Sch94}, Schnorr and Euchner proposed pruned enumeration to speed up lattice enumeration but at the expense of reducing the probability of success. Briefly, instead of enumerating at each level using the fixed enumeration radius $R$, one constructs $m$ pruning coefficients $(P_1, P_2, \ldots, P_m)$ so that at each level $k$, the levelled enumeration radius is reduced to $R_k^2 = P_kR^2$. Here, the $P_k$'s must satisfy $0 < P_m \le P_{m-1} \le \ldots \le P_1 =1$. Hence, at each level $k$, the nodes are constructed as the coefficients $v_k$ resulting in $\vert \pi_k(\vec{v}-\vec{t})\vert \le R_k$.

Since the leveled enumeration radius is reduced, the probability of success is correspondingly decreased. This can be overcome by performing the algorithm multiple times with different input bases of the given lattice. Hence, there is a need to find a good balance between the time to perform the basis reduction and the time to perform the enumeration algorithm. {\em Extreme pruning } was suggested in \cite{GNR10}, where the pruning coefficients are chosen to result in a small probability of success. The authors argued that while the probability of success is much reduced, the reduction in the time to perform the enumeration algorithm is even greater, thereby decreasing the overall time. Finally, in \cite{Aono14}, the authors proposed an explicit algorithm to compute optimized coefficients given a fixed reduction algorithm (and hence, a certain root Hermite constant) and a desired probability of success. The same assumptions as used in \cite{GNR10} were used in \cite{Aono14}. Based on their algorithms, tables of optimized pruning coefficients were provided for some parameters.

We remark that solving BDD is one of the approaches to solve the learning with errors (LWE) problem. As such, various experiments had been performed on LWE instances via the BDD approach \cite{LN13,Kirshanova:PIO}.
A good discussion of the various approaches to solve CVP/BDD can be found in \cite{APS15}.

\begin{remark}
We remark that when the basis $B$ of a lattice $L$ is orthogonal, then solving the SVP and CVP become easy. This motivates us to define the {\em orthogonality defect} of a basis. Let $B=\{\vec{b}_1, \ldots, \vec{b}_n\}$ be a basis of a lattice $L$. Then the orthogonality defect of $B$ is defined by
$$h(B) = \frac{\prod_{i=1}^n\vert \vec{b}_i\vert}{{\rm det}(L)}.$$

Observe that $h(B) \ge 1$, and $h(B)=1$ whenever $B$ is orthogonal. It is suggested in \cite{GGH97} that Babai's nearest plane algorithm solves the CVP with respect to $B$ when $h(B)$ is close to $1$.
\end{remark}

\subsection{The McEliece Cryptosystem}

The McEliece encryption scheme was proposed in \cite{M78} as a public-key cryptosystem that is based on hard problems in algebraic coding theory instead of the usual integer factoring problem or discrete logarithm problems in groups. More specifically, its construction  hinges on the difficulty to decode general linear codes  over finite fields. Unlike the widely-used integer factoring problem and discrete logarithm problem which had been proven to be vulnerable to polynomial-time quantum algorithms \cite{S94}, the decoding problem is touted as one of the potential candidates to be used as a basis for post-quantum cryptography.

Essentially, the McEliece encryption scheme generates two linear codes, one with an easy decoding strategy while the other is presumably difficult to decode. Concretely, let $G$ be an $[n,k,2t+1]$ binary Goppa code which admits an efficient decoding algorithm. Let $U$ be a $k \times k$ invertible binary matrix and $P$ an $n \times n$ permutation matrix. Let $G' = UGP$. Then $G'$ represents a general linear code with no obvious way to decode. The basic structure of the McEliece encryption scheme can be described as follows:

{\bf Public key: }
The matrix $G'$ and the parameters $n,k,t$.

{\bf Private key: }
The matrices $G,U$ and $P$.

{\bf Encryption: }
Let $\vec{m}$ be a $k$-bit message. Randomly pick an $n$-bit error vector $\vec{e}$ with Hamming weight $t$. The encryption of $\vec{m}$ is given by
$$\vec{c}=E(\vec{m}) = \vec{m}G'+\vec{e}.$$

{\bf Decryption:}
Let $\vec{c'} = \vec{c}P^{-1}$. With the secret key $G$, decode $\vec{c'}$ to obtain the message $\vec{m'}$. Compute $\vec{m} = \vec{m'}U^{-1}$.

At present, the most effective attacks on the McEliece cryptosystem are variants of the information-set decoding attack \cite{LB88,S88}. In \cite{B08}, the authors successfully attacked the parameters proposed in the original paper. Nonetheless, these attacks remain exponential in the parameters and the security can be improved by increasing the parameter sizes. The main disadvantage of the McEliece cryptosystem is therefore, the relatively large public key sizes. For example, it was suggested that public key sizes of $256$ \texttt{KB} and $512$ \texttt{KB} are needed to ensure around $146$-bit and $187$-bit security, respectively.

\subsection{The GGH Cryptosystem}

The GGH cryptosystem was presented in \cite{GGH97} as a lattice analog of the McEliece cryptosystem. While the McEliece scheme exploits the difficulty to decode the received word obtained from a random code whenever errors of small weights are introduced, the GGH scheme relies on a similar phenomenon on general lattices. Indeed, the GGH scheme constructs two different bases of the same lattice, one of which allows CVP to be solved efficiently via Babai's nearest plane algorithm while the other basis is constructed as a random basis of the lattice and hence, has very poor performance with respect to CVP. More precisely, one first constructs a basis $B$ with short highly orthogonal rows (that is a basis with small orthogonality defect) and then multiply $B$ by random unimodular matrices to obtain a basis $B'$ of the same lattice with much higher orthogonality defect. $B$ is then used as the private key while $B'$ will serve as the public key. The basic structure of the GGH encryption scheme is presented next.

{\bf Private key:}
The basis $B$ and a unimodular matrix $U$;

{\bf Public key:}
The basis $B'=UB$, and the parameter $n$ and a small positive integer $\sigma$;

{\bf Encryption:}
Let the message $\vec{m} \in \mathbb{Z}^n$. Choose an error $\vec{e} \in \mathbb{Z}^n$ whose entries are randomly picked to be $\pm \sigma$. The encryption of $\vec{m}$ is given by
$$\vec{c} = E(\vec{m}) = \vec{m}B'+\vec{e}.$$

{\bf Decryption: }
Using Babai's nearest plane algorithm and the basis $B$, determine the vector $\vec{v}$ closest to $\vec{c}$. We have $\vec{m} = \vec{v}U^{-1}$.

Observe that, in order for the GGH scheme to work, one must be able to solve CVP with $B$ but not with $B'$. It follows that one needs to multiply $B$ by suitably dense unimodular matrices. As a result, the entries in $B'$ tend to be much larger.

Although the GGH encryption scheme is not resistant to the embedding algorithms for $n<400$, the scheme is still asymptotically secure. Furthermore,  the underlying ideas of the GGH construction remain interesting, particularly serving as trapdoor functions where inverting the function amounts to solving the BDD problem.

\section{Polynomial Lattices}
In this section, we give a new construction of lattices via polynomials over a finite field.
Let $q$ be a prime power. We denote by $\F_q$ the finite field with $q$ elements. Let $\fR$ denote the polynomial ring $\F_q[x]$. Fix a monic polynomial $c(x) \in \fR$ of degree $d$ and let $\fQ_{c(x)}$ denote the quotient ring $\fR/c(x)$.  Let $\fQ_{c(x)}^*$ denote the unit group of $\fQ_{c(x)}$, i.e., let $\fQ_{c(x)}^*=\{\overline{f(x)}\in \fQ_{c(x)}:\; \gcd(f(x),c(x))=1\}$. It is easy to verify that $\fQ_{c(x)}^*$ forms a multiplicative group. Furthermore, the cardinality of $\fQ_{c(x)}^*$, denoted by $\Phi(c(x))$, is given by the following formula.

\begin{lemma}
\label{unitgroup_lem}\cite[Lemma 3.69]{lidl1997finite}
Let $c(x)$ have the canonical factorization $\prod_{i=1}^tc_i(x)^{e_i}$, where $c_i(x)$'s are pairwise distinct monic irreducible polynomials over $\F_q$, $d_i$ are the degrees of $c_i(x)$ and $e_i \ge 1$. We have
$$\Phi(c(x)) = \prod_{i=1}^t(q^{e_i n_i}-q^{(e_i-1) n_i}).$$
\end{lemma}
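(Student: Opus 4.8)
The plan is to reduce to the prime-power case via the Chinese Remainder Theorem and then count directly. First I would observe that, since the factors $c_i(x)^{e_i}$ are pairwise coprime in the Euclidean domain $\fR = \F_q[x]$, the CRT gives a ring isomorphism
\[
\fQ_{c(x)} \;=\; \fR/(c(x)) \;\cong\; \prod_{i=1}^t \fR/(c_i(x)^{e_i}).
\]
Passing to unit groups on both sides (the units of a finite direct product of rings are exactly the tuples whose components are units in each factor) yields $\fQ_{c(x)}^* \cong \prod_{i=1}^t \bigl(\fR/(c_i(x)^{e_i})\bigr)^*$, and hence $\Phi(c(x)) = \prod_{i=1}^t \Phi(c_i(x)^{e_i})$. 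So it suffices to establish the single-prime formula $\Phi(p(x)^e) = q^{ed} - q^{(e-1)d}$ for a monic irreducible $p(x)$ of degree $d$ and an exponent $e \ge 1$.

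For that step, I would identify the residue classes modulo $p(x)^e$ with the polynomials over $\F_q$ of degree $< ed$, of which there are exactly $q^{ed}$. A class $\overline{f(x)}$ is a unit precisely when $\gcd(f(x), p(x)^e) = 1$; since $p(x)$ is irreducible, the only monic polynomials dividing $p(x)^e$ are the powers of $p(x)$, so this condition is equivalent to $p(x) \nmid f(x)$. Therefore the non-units are exactly the multiples $p(x)\,g(x)$ with $\deg g(x) < (e-1)d$, and there are $q^{(e-1)d}$ of them. Subtracting gives $\Phi(p(x)^e) = q^{ed} - q^{(e-1)d}$, and substituting into the product over $i$ finishes the proof.

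I do not expect a genuine obstacle here: the argument is elementary once one has CRT in $\F_q[x]$ and the characterization of units of $\fR/(c(x))$ by coprimality (already invoked in defining $\fQ_{c(x)}^*$). The only points meriting a word of care are (i) that units of a product ring are componentwise units, which is immediate, and (ii) that "$\gcd(f(x),p(x)^e)=1$" collapses to "$p(x)\nmid f(x)$", which is exactly where irreducibility of $p(x)$ is used. Since this is the classical result \cite[Lemma 3.69]{lidl1997finite}, it would also be reasonable simply to cite it and record the short counting argument above for completeness.
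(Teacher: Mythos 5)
Your argument is correct and is precisely the classical proof of the cited result: the paper itself offers no proof, deferring entirely to \cite[Lemma 3.69]{lidl1997finite}, and your CRT reduction to the prime-power case followed by counting the non-units as the multiples of $p(x)$ of degree $< ed$ is the standard argument given there. The only cosmetic point is that the statement's exponent $n_i$ should read $d_i$ (the degree of $c_i(x)$), a notational slip in the paper rather than in your proof.
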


Let $\alpha_1, \ldots, \alpha_n$ be $n$ distinct elements in $\F_q$ such that $c(\alpha_i) \ne 0$ for $i=1, \ldots, n$. Denote by $\ba$ the vector $(\alpha_1, \ldots, \alpha_n)\in\F_q^n$.

Define the map
\begin{alignat*}{10}
     \phi_\ba: &   \qquad \Z^n      & \longrightarrow & \qquad\qquad \fR                    &\longrightarrow & \qquad \fQ^*_{c(x)} \\
 & (u_1, \ldots, u_n) & \longmapsto     & f = \prod_{i=1}^n(x-\alpha_i)^{u_i} &\longmapsto     & f(x) \mod c(x).
\end{alignat*}

Observe that $\phi_\ba$ is a group homomorphism from $\Z^n$ to $\fQ^*_{c(x)}$. Let $\mL_{\ba,c(x)}$ denote the kernel of $\phi_\ba$. As $\mL_{\ba,c(x)}$ is a subgroup of $\Z^n$, $\mL_{\ba,c(x)}$ is a lattice. The following lemma provides some important properties of $\mL_{\ba,c(x)}$.

\begin{lemma}
\label{lattice_properties_lem}
The lattice $\mL_{\ba,c(x)}$ defined above satisfies the following properties:

\begin{itemize}
\item[{\rm (i)}]  $\mL_{\ba,c(x)}$ has rank $n$.
\item[{\rm (ii)}]  The determinant $\det(\mL_{\ba,c(x)})$ is upper bounded by $\Phi(c(x))$. Furthermore, $\det(\mL_{\ba,c(x)})=\Phi(c(x))$ if $\phi_\ba$ is surjective.
\item[{\rm (iii)}]  $\lambda_1(\mL_{\ba,c(x)}) \ge \sqrt{d}$. Moreover, $\lambda_1(\mL_{\ba,c(x)}) = \sqrt{d}$ if and only if there exists a lattice point with $d$ nonzero entries which are either all $1$ or $-1$.
\end{itemize}
\end{lemma}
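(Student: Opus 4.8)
The plan is to handle the three parts more or less independently, since they probe different features of $\mL_{\ba,c(x)}$. For part (i), I would argue that $\mL_{\ba,c(x)}$ has finite index in $\Z^n$, which immediately forces full rank $n$. The key observation is that the image $\phi_\ba(\Z^n)$ lands in the finite group $\fQ^*_{c(x)}$, so for each standard basis vector $\vec{e}_j$ the element $\phi_\ba(\vec{e}_j) = \overline{(x-\alpha_j)}$ has finite multiplicative order $m_j$ in $\fQ^*_{c(x)}$; hence $m_j\vec{e}_j \in \mL_{\ba,c(x)}$. These $n$ vectors $m_1\vec{e}_1, \ldots, m_n\vec{e}_n$ are linearly independent, so the lattice has rank $n$.

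For part (ii), I would invoke the first isomorphism theorem: $\phi_\ba$ induces an injection $\Z^n/\mL_{\ba,c(x)} \hookrightarrow \fQ^*_{c(x)}$, so the index $[\Z^n : \mL_{\ba,c(x)}]$ divides (in particular is at most) $|\fQ^*_{c(x)}| = \Phi(c(x))$, with equality exactly when $\phi_\ba$ is surjective. Then I would use the standard fact that for a full-rank sublattice of $\Z^n$, the determinant equals the index: $\det(\mL_{\ba,c(x)}) = [\Z^n:\mL_{\ba,c(x)}]$. Combining these gives $\det(\mL_{\ba,c(x)}) \le \Phi(c(x))$ with equality iff $\phi_\ba$ is onto.

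For part (iii), the lower bound is the substantive one. Take any nonzero $\vec{u} = (u_1, \ldots, u_n) \in \mL_{\ba,c(x)}$ and write the associated rational function $\prod_{i=1}^n (x-\alpha_i)^{u_i}$ as $g(x)/h(x)$ in lowest terms, where $g(x) = \prod_{u_i>0}(x-\alpha_i)^{u_i}$ and $h(x) = \prod_{u_i<0}(x-\alpha_i)^{-u_i}$ are coprime (the $\alpha_i$ are distinct). The lattice condition $\prod_i(x-\alpha_i)^{u_i}\equiv 1 \pmod{c(x)}$ means $g(x) \equiv h(x) \pmod{c(x)}$, i.e. $c(x) \mid g(x)-h(x)$. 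Since $\gcd(\prod_i(x-\alpha_i), c(x))=1$, neither $g$ nor $h$ is divisible by $c(x)$, so $g(x)-h(x)$ is a nonzero polynomial of degree at least $d=\deg c$; on the other hand $\deg(g-h) \le \max(\deg g, \deg h) = \max(\sum_{u_i>0}u_i, \sum_{u_i<0}(-u_i))$. Hence $\max(\sum_{u_i>0}u_i,\ \sum_{u_i<0}(-u_i)) \ge d$, which already forces $\sum_i |u_i| \ge d$, and therefore (since $|u_i|\le |u_i|^2$ could be too weak) I instead argue directly: if $\sum_{u_i>0}u_i \ge d$ then at least $d$ of the indices have $u_i \ne 0$ counted with the relevant sign — more carefully, the number of nonzero coordinates is at least $\max$ of the two sums only when all $|u_i|\le 1$, so in general I bound $|\vec{u}|^2 = \sum u_i^2 \ge \sum_{u_i>0} u_i^2 \ge \sum_{u_i>0}u_i \ge d$ (using $u_i^2\ge u_i$ for positive integers), and symmetrically with the negative part, giving $|\vec{u}| \ge \sqrt{d}$. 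For the equality characterization, $|\vec{u}|^2 = d$ with the above forces all nonzero $u_i \in \{1,-1\}$ and, chasing the inequalities, that $g$ and $h$ cannot mix — one of $g,h$ is $1$ — so there is a lattice point with exactly $d$ entries equal to $1$ (or exactly $d$ equal to $-1$); conversely such a point obviously has norm $\sqrt d$ and lies in the lattice because $\prod(x-\alpha_i) \equiv$ the appropriate residue, wait — one must check that such a $\pm 1$-vector with $d$ nonzero entries can actually satisfy $\prod(x-\alpha_i)^{\pm 1} \equiv 1$; the claim is only the logical equivalence, so I would phrase the converse as: \emph{if} such a point exists then $\lambda_1 = \sqrt d$, which is immediate from the lower bound $\lambda_1 \ge \sqrt d$ plus exhibiting a vector of norm exactly $\sqrt d$.

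I expect the main obstacle to be the degree bookkeeping in part (iii): making precise that $c(x) \mid g(x) - h(x)$ with $g \ne h$ (one needs $g \not\equiv h$, equivalently $\vec u \ne \vec 0$ together with coprimality of $g,h$ with $c$) forces $\deg(g-h) \ge d$, and then converting a statement about degrees (which count multiplicities, i.e. $\ell_1$-type quantities) into the Euclidean-norm bound $\sqrt d$ and tracking exactly when equality propagates. The other two parts are essentially formal consequences of the homomorphism $\phi_\ba$ and the index–determinant identity for sublattices of $\Z^n$.
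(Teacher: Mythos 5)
Your proposal is correct and follows essentially the same route as the paper's proof: finite multiplicative order of the $\overline{(x-\alpha_j)}$ for full rank, the first isomorphism theorem plus the index--determinant identity for part (ii), and for part (iii) the divisibility $c(x)\mid g(x)-h(x)$ forcing $\max(\deg g,\deg h)\ge d$, combined with $u_i^2\ge |u_i|$ for integers to pass from the degree ($\ell_1$-type) bound to the Euclidean bound, with equality tracing through to force all nonzero entries equal to $1$ or all equal to $-1$. The only cosmetic difference is that the paper bounds $\sum_i v_i^2 \ge \sum_i |v_i| \ge d$ in one step rather than splitting into the positive and negative parts separately.
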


\begin{proof}
\begin{itemize}
\item[{\rm (i)}]  Observe that for each $i=1, 2, \ldots, n$, we have $(0,0,\Phi(c(x)),\ldots,0) \mapsto (x-\alpha_i)^{\Phi(c(x))} \mapsto 1$ under $\phi_\ba$. Hence, each of these points is in $\mL_{\ba,c(x)}$. As these $n$ points are clearly linearly independent, they form a sub-lattice of $\mL_{\ba,c(x)}$ of rank $n$. Consequently, $\mL_{\ba,c(x)}$ has rank $n$.

\item[{\rm (ii)}]  As $\Z^n/\mL_{\ba,c(x)} \simeq{\rm Im}(\phi_\ba)\le \fQ^*_{c(x)}$
and  $\det(\mL_{\ba,c(x)}) = [\Z^n: \mL_{\ba,c(x)}]\det(\Z^n) = [\Z^n: \mL_{\ba,c(x)}]=|{\rm Im}(\phi_\ba)|$, we obtain the desired inequality. In addition, if $\phi_\ba$ is surjective, then $\Z^n/\mL_{\ba,c(x)} \simeq \fQ^*_{c(x)}$. Hence the equality follows.

\item[{\rm (iii)}]   Let $\bv = (v_1,v_2,\ldots,v_n)$ be a nonzero point in $\mL_{\ba,c(x)}$. Denote by $I$ and $J$ the sets $\{1\leq i\leq n:\; v_i>0\}$ and $\{1\leq j\leq n:\; v_j<0\}$, respectively. By definition of $\mL_{\ba,c(x)}$, we have $\prod_{i\in I}(x-\Ga_i)^{v_i}\prod_{j\in J}(x-\Ga_j)^{v_j}-1\equiv 0\mod c(x)$, i.e., the nonzero polynomial $\prod_{i\in I}(x-\Ga_i)^{v_i}-\prod_{j\in J}(x-\Ga_j)^{-v_j}$ is divisible by $c(x)$. Hence, $\deg(\prod_{i\in I}(x-\Ga_i)^{v_i}-\prod_{j\in J}(x-\Ga_j)^{-v_j})\ge d$, i.e., $\sum_{i\in I}v_i\ge d$ or $\sum_{j\in J}-v_j\ge d$. This gives $\sum_{i=1}^n|v_i|\ge d$. Therefore, $||\bv||=\sqrt{\sum_{i=1}^nv_i^2}\ge \sqrt{\sum_{i=1}^n|v_i|}\ge \sqrt{d}$ (note that each $v_i$ is an integer).

If there exists a lattice point with $d$ nonzero entries which are either all $1$ or $-1$, then it is clear that $\lambda_1(\mL_{\ba,c(x)}) = \sqrt{d}$. Conversely, assume $\lambda_1(\mL_{\ba,c(x)}) = \sqrt{d}$. Then there exists a nonzero lattice point $\bv = (v_1,v_2,\ldots,v_n)$  in $\mL_{\ba,c(x)}$ such that $||\bv||=\sqrt{\sum_{i=1}^nv_i^2}=\sqrt{d}$. Since $\deg(\prod_{i\in I}(x-\Ga_i)^{v_i}-\prod_{j\in J}(x-\Ga_j)^{-v_j})\ge d$, we must have that either $I=\emptyset$ \& $\sum_{j\in J}-v_j= d$ or $J=\emptyset$ \& $\sum_{i\in I} v_i= d$. This forces that either  $v_i=1$ for all $i\in I$ or $v_j=-1$ for all $j\in J$.
\end{itemize}
\end{proof}

According to Lemma \ref{lattice_properties_lem} (iii), we see that $\mL_{\ba,c(x)}$ has minimum norm $\lambda_1(\mL_{\ba,c(x)}) = \sqrt{d}$ when there exist $i_1, \ldots, i_d \in [n]$ such that $\prod_{j=1}^d(x-\alpha_{i_j}) = 1 +c(x)$. It follows that there are at most ${n \choose d}$ different $c(x) \in \fR$ of degree $d$ out of a total of $q^d$ such polynomials such that $\mL_{\ba,c(x)}$ satisfies $\lambda_1(\mL_{\ba,c(x)}) = \sqrt{d}$. In other words, given a polynomial $c(x)$ of degree $d$ and an ordered set $(\alpha_1, \ldots, \alpha_n)$, the probability that the lattice $\mL_{\ba, c(x)}$ has minimum norm $\sqrt{d}$ is less than $1/d!$ and we can expect the minimum norm of the lattice $\mL_{\ba,c(x)}$ to be bigger (if $d$ is small). In particular, we will use the Gaussian heuristic to estimate the minimum norm of the lattices.  Assume that the map $\phi_\ba$ is surjective. By Lemma \ref{lattice_properties_lem} (ii) and Lemma \ref{unitgroup_lem}, the determinant of $\mL_{\ba,c(x)}$ is approximately $q^d$. The Gaussian heuristic suggests that a random lattice of dimension $n$ and determinant $q^d$ has minimum norm approximately $\sqrt{n/2\pi e}q^{d/n}$.




Next, we describe how to construct the ordered set $\ba = (\alpha_1, \ldots, \alpha_n)$ for which $\mL_{\ba,c(x)}$ admits a nice basis for a class of $c(x)$. In the following, we assume that $c(x)$ is of the form $c(x) = c_1(x)\ldots c_t(x)$, where $c_i(x)$'s are pairwise coprime irreducible polynomials over $\F_q$, each having degree $d_0$. Hence, $\fQ_{c(x)} \cong \oplus_{i=1}^t\F_{q^{d_0}}$. Let $\beta$ denote a generator of $\F_{q^{d_0}}$.

Let $\alpha_{n-t+1}, \ldots, \alpha_n$ be $t$ distinct elements in $\F_q$. For $i=1, \ldots, t$ and $j=1, \ldots, t$, let $\gamma_{ij} = x-\alpha_{n-t+i} \mod c_j(x)$.
Let $m_{ij} = \log_\beta{\gamma_{ij}}$ and $M = (m_{ij})_{i=1,\ldots,t, j=1, \ldots, t}$.

Suppose that $M$ is invertible over the ring $\Z_{q^{d_0}-1}$. For each $\alpha \in \F_q$ with $\alpha \ne \alpha_{n-t+1}, \ldots, \alpha_n$, let $\vec{y} = (y_1, \ldots, y_t)$, where $y_j = \log_{\beta}((x-\alpha) \mod c_j(x))$, $j=1, \ldots, t$. Let $\vec{g_{\alpha}} = \vec{y}M^{-1} \mod q^{d_0}-1$. Write $\vec{g_{\alpha}} = (g_{\alpha,n-t+1}, \ldots, g_{\alpha,n})$. Note that for each $j=1, \ldots, t$,
$$y_j = \sum_{i=1}^tg_{\alpha,n-t+i}m_{ij} \mod q^{d_0}-1.$$

For $j=1, \ldots, t$, we have

\begin{align*}
\prod_{i=1}^t(x-\alpha_{n-t+i})^{g_{\alpha,n-t+i}} & \mod c_j(x) \equiv \prod_{i=1}^t(\beta^{m_{ij}})^{g_{\alpha,n-t+i}} \mod c_j(x) \\
\equiv \beta^{\sum_{i=1}^tg_{\alpha,n-t+i}m_{ij}} & \mod c_j(x) \equiv \beta^{y_j} \mod c_j(x) \equiv x-\alpha \mod c_j(x).
\end{align*}

Since it holds for any $c_j(x)$, it follows that $x-\alpha \equiv \prod_{i=1}^t(x-\alpha_{n-t+i})^{g_{\alpha,n-t+i}} \mod c(x).$
Consequently, the point $(0,\ldots,1,0,\ldots,-g_{\alpha,n-t+1},\ldots,-g_{\alpha,n})$, where $1$ is in the entry indexed by $\alpha$ is a point in $\mL_{\ba,c(x)}$ for any $\alpha \in (\alpha_1, \ldots, \alpha_{n-t})$.

\begin{proposition}
\label{basis_prop}
Let $\alpha_{n-t+1}, \ldots, \alpha_n$ be $t$ distinct elements in $\F_q$ with the matrix $M$ as above. Suppose that $M$ is invertible over $\Z_{q^{d_0}-1}$. Pick $\alpha_1, \ldots, \alpha_{n-t}$ randomly from $\F_q$ such that $\ba = (\alpha_1, \ldots, \alpha_n)$ contains $n$ distinct elements. Define $G$ as the $(n-t) \times t$ matrix with rows given by $\vec{g_{\alpha_i}}$, for $i=1, \ldots, n-t$. A basis of the lattice $\mL_{\ba,c(x)}$ is given by:
$$B_{\ba,c(x)} = \left(\begin{matrix}
I_{n-t}&-G\\
0_{t\times (n-t)} & (q^{d_0}-1)I_t
\end{matrix}\right),$$
where $I_r$ denotes the identity matrix of rank $r$.
\end{proposition}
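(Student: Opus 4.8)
The plan is to show two things: that the rows of $B_{\ba,c(x)}$ are linearly independent and all lie in $\mL_{\ba,c(x)}$ — so they generate a full‑rank sublattice $L' \subseteq \mL_{\ba,c(x)}$ — and conversely that every point of $\mL_{\ba,c(x)}$ is an integer combination of these rows, so $L' = \mL_{\ba,c(x)}$ and $B_{\ba,c(x)}$ is indeed a basis. Linear independence is immediate: $B_{\ba,c(x)}$ is block upper‑triangular with diagonal blocks $I_{n-t}$ and $(q^{d_0}-1)I_t$, so $\det B_{\ba,c(x)} = (q^{d_0}-1)^t \neq 0$.

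For $L' \subseteq \mL_{\ba,c(x)}$ I would check the two row blocks separately. For $i=1,\ldots,n-t$, the $i$-th row is exactly the point $(0,\ldots,1,\ldots,0,-g_{\alpha_i,n-t+1},\ldots,-g_{\alpha_i,n})$ with the $1$ in position $i$, and the computation carried out just before the Proposition shows that $\phi_\ba$ sends it to $(x-\alpha_i)\prod_{k=1}^t(x-\alpha_{n-t+k})^{-g_{\alpha_i,n-t+k}} \equiv 1 \bmod c(x)$, hence it lies in $\ker\phi_\ba = \mL_{\ba,c(x)}$. For the last $t$ rows, the $(n-t+j)$-th row maps to $(x-\alpha_{n-t+j})^{q^{d_0}-1}\bmod c(x)$; since $c(\alpha_{n-t+j})\neq 0$ the class of $x-\alpha_{n-t+j}$ lies in $\fQ^*_{c(x)}$, and because $\fQ^*_{c(x)} \cong \bigoplus_{i=1}^t \F_{q^{d_0}}^*$ has exponent $q^{d_0}-1$, this power equals $1$. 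So all rows lie in $\mL_{\ba,c(x)}$.

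For the reverse inclusion, take any $\bv=(v_1,\ldots,v_n)\in\mL_{\ba,c(x)}$ and subtract $\sum_{i=1}^{n-t} v_i\cdot(\text{$i$-th row of }B_{\ba,c(x)})$; since the first $n-t$ rows carry $I_{n-t}$ in their leading coordinates, the difference $\bv'$ is supported on the last $t$ coordinates, say $\bv'=(0,\ldots,0,w_1,\ldots,w_t)$ with $w_j\in\Z$, and $\bv'\in\mL_{\ba,c(x)}$ because every subtracted row is. Thus $\prod_{j=1}^t(x-\alpha_{n-t+j})^{w_j}\equiv 1\bmod c(x)$; reducing modulo each $c_k(x)$ and applying $\log_\beta$, this says $\sum_{j=1}^t w_j m_{jk}\equiv 0\bmod(q^{d_0}-1)$ for every $k$, i.e.\ $(w_1,\ldots,w_t)M\equiv\vec 0\bmod(q^{d_0}-1)$. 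Invertibility of $M$ over $\Z_{q^{d_0}-1}$ then forces $w_j\equiv 0\bmod(q^{d_0}-1)$ for all $j$, so $\bv'=\sum_{j=1}^t\frac{w_j}{q^{d_0}-1}\cdot(\text{$(n-t+j)$-th row})$ is an integer combination of the last $t$ rows. Hence $\bv\in L'$, giving $\mL_{\ba,c(x)}\subseteq L'$ and the Proposition.

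The only step with real content is the last one, and its crux is precisely the hypothesis that $M$ is invertible over $\Z_{q^{d_0}-1}$: that is what forbids a proper finite‑index overlattice. An alternative packaging is a determinant count — $\det(L')=(q^{d_0}-1)^t=\Phi(c(x))$ by Lemma~\ref{unitgroup_lem}, while $L'\subseteq\mL_{\ba,c(x)}$ with $\det(\mL_{\ba,c(x)})\le\Phi(c(x))$ by Lemma~\ref{lattice_properties_lem}(ii), forcing $[\mL_{\ba,c(x)}:L']=1$ — but this variant still requires $\phi_\ba$ to be surjective, which again amounts to the rows of $M$ generating $\Z_{q^{d_0}-1}^t$, so nothing is genuinely saved. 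I would present the direct argument above; it is self‑contained and, as a byproduct, shows $\phi_\ba$ is onto.
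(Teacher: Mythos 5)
Your proposal is correct and follows essentially the same route as the paper's proof: membership of the first $n-t$ rows via the precomputed relations $x-\alpha_i \equiv \prod_k(x-\alpha_{n-t+k})^{g_{\alpha_i,n-t+k}} \bmod c(x)$, membership of the last $t$ rows via the exponent of $\fQ^*_{c(x)}$, and the spanning argument by subtracting multiples of the first $n-t$ rows and invoking invertibility of $M$ over $\Z_{q^{d_0}-1}$. The extra remarks (the explicit determinant for independence, the alternative index/determinant count) are consistent elaborations rather than a different method.
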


\begin{proof}
According to the preceding arguments, the first $n-t$ rows of $B_{\ba,c(x)}$ are points in $\mL_{\ba,c(x)}$. Since $(x-\alpha_i)^{q^{d_0}-1} \equiv 1 \mod c(x)$ for $i=n-t+1, \ldots, n$, the last $t$ rows of $B_{\ba,c(x)}$ are also in $\mL_{\ba,c(x)}$. Clearly, the rows of the matrix are linearly independent. It remains to show that the rows span $\mL_{\ba,c(x)}$. Let $\vec{u}=(u_1,\ldots,u_n)$ be a point in $\mL_{\ba,c(x)}$ so that $\prod_{i=1}^n(x-\alpha_i)^{u_i} \equiv 1 \mod c(x)$. Consider the point $\vec{v} = \vec{u} - \sum_{i=1}^{n-t}u_iB_i$, where $B_i$ denotes the $i$-th row of $B_{\ba,c(x)}$. Hence, $\vec{v} \in \mL_{\ba,c(x)}$ and we can write $\vec{v} = (0,\ldots,0,v_{n-t+1}, \ldots, v_n)$. It is sufficient to show that $v_{n-t+i} \equiv 0 \mod q^{d_0}-1$ for $i=1, \ldots, t$. In other words, $\prod_{i=1}^t(x-\alpha_{n-t+i})^{v_{n-t+i}} \equiv 1 \mod c(x)$, equivalently, $\prod_{i=1}^t(x-\alpha_{n-t+i})^{v_{n-t+i}} \equiv 1 \mod c_j(x)$ for $j=1, \ldots, t$.
Now, $$\prod_{i=1}^t(x-\alpha_{n-t+i})^{v_{n-t+i}} \equiv \prod_{i=1}^t(\beta^{m_{ij}})^{v_{n-t+i}} \equiv \beta^{\sum_{i=1}^tv_{n-t+i}m_{ij}} \equiv 1 \mod c_j(x)$$
which gives $(v_{n-t+1}, \ldots, v_n)M = 0 \mod q^{d_0}-1$. Since $M$ is invertible, we conclude that $v_{n-t+i} \equiv 0 \mod q^{d_0}-1$ for $i=1, \ldots, t$.
\end{proof}

\begin{remark}
Note that the lattices $\mL_{\ba, c(x)}$ for different pairs of $\ba$ and $c(x)$ are not all distinct. For instance, let $\ba = (\alpha_1, \ldots, \alpha_n)$ and let $\gamma \ne 0 \in \F_q$. Let $\ba' = (\alpha_1+\gamma, \ldots, \alpha_n+\gamma)$ and $c'(x) = c(x-\gamma)$. Then, it is easy to check that $\mL_{\ba, c(x)} = \mL_{\ba', c'(x)}$.
\end{remark}

Next, we analyze the complexity of constructing $B_{\ba,c(x)}$. First, one needs to compute about $tn$ discrete logs in the field $\F_{q^{d_0}}$. The discrete logarithm problem over finite fields is one of the fundamental hard problems widely used in cryptography. Extensive studies have been done in this area and various methods have been proposed to solve the discrete logarithm problem over finite fields. In particular, it is adequate for us to employ Pollard's rho method to compute discrete logarithm with time complexity $O(\sqrt{q^{d_0}})$.
Please refer to the survey paper \cite{JouxP16} for the state-of-the-art results on the discrete logarithm problem.
For our construction, we have $r = q^{d_0}$. For $q=O(n)$ and $d_0=O(1)$, it follows that solving the discrete log is efficient.

Second, one needs to pick $\alpha_{n-t+1}, \ldots, \alpha_n$ so that the matrix $M$ is invertible. Now, each entry $m_{ij}$ is the discrete log of $x-\alpha_{n-t+i} \mod c_j(x)$. Since $\alpha_{n-t+i}$ is random, we may assume that the matrix $M$ is a random matrix in the ring $\Z_{q^{d_0}-1}$.


\begin{lemma}\cite[Theorem 2]{Ngu99}
Let $s = q^{d_0}-1$ be a positive integer. Let $p_1, \ldots, p_m$ be the distinct prime divisors of $s$. The probability that a random $t \times t$ matrix in $\Z_s$ is invertible is
$$P_s = \prod_{i=1}^{m} \prod_{j=1}^{t} (1-p_i^{-j}).$$
\end{lemma}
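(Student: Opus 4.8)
The plan is to reduce the statement over $\Z_s$ to the corresponding statements over the prime-power factors of $s$ and then to prime fields, via the Chinese Remainder Theorem. First I would write the prime factorization $s = \prod_{i=1}^m p_i^{a_i}$ and invoke the ring isomorphism $\Z_s \cong \prod_{i=1}^m \Z_{p_i^{a_i}}$, which induces a bijection on $t \times t$ matrices $M_t(\Z_s) \cong \prod_{i=1}^m M_t(\Z_{p_i^{a_i}})$ that respects the property of being invertible (a matrix is a unit in $M_t(R_1 \times R_2)$ iff each component is a unit). Since the uniform distribution on $M_t(\Z_s)$ corresponds under this bijection to the product of the uniform distributions on each $M_t(\Z_{p_i^{a_i}})$, the events ``$M \bmod p_i^{a_i}$ is invertible'' are independent, and the total probability factors as $P_s = \prod_{i=1}^m P_{p_i^{a_i}}$, where $P_{p_i^{a_i}}$ is the probability that a uniformly random matrix in $M_t(\Z_{p_i^{a_i}})$ is invertible.

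Next I would reduce each local factor to the prime-field case. A matrix $M \in M_t(\Z_{p^a})$ is invertible if and only if its reduction $\overline{M} \in M_t(\F_p)$ is invertible: one direction is clear since a unit maps to a unit under the ring quotient $\Z_{p^a} \to \F_p$; for the converse, if $\det(\overline{M}) \ne 0$ in $\F_p$ then $\det(M)$ is a unit in $\Z_{p^a}$ (it is not divisible by $p$), hence $M$ is invertible over $\Z_{p^a}$ by the adjugate formula. Moreover, the reduction map $M_t(\Z_{p^a}) \to M_t(\F_p)$ is uniform-to-uniform (every fiber has the same size $p^{a t^2 - t^2}$). Therefore $P_{p^a} = P_p^{(\mathrm{fld})}$, the probability that a uniformly random $t \times t$ matrix over the field $\F_p$ is nonsingular; in particular this probability does not depend on the exponent $a$, which is exactly what the stated formula asserts.

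Finally I would compute the prime-field probability by the standard counting argument: $|\mathrm{GL}_t(\F_p)| = \prod_{k=0}^{t-1}(p^t - p^k)$, obtained by choosing the rows one at a time so that each new row avoids the span of the previous ones. Dividing by $|M_t(\F_p)| = p^{t^2}$ gives
$$
P_p^{(\mathrm{fld})} = \frac{\prod_{k=0}^{t-1}(p^t - p^k)}{p^{t^2}} = \prod_{k=0}^{t-1}\left(1 - p^{k-t}\right) = \prod_{j=1}^{t}\left(1 - p^{-j}\right),
$$
after reindexing $j = t-k$. Combining this with the two reduction steps yields $P_s = \prod_{i=1}^m \prod_{j=1}^t (1 - p_i^{-j})$, as claimed. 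The only mildly delicate point is the lift from $\F_p$ to $\Z_{p^a}$ — i.e., checking that invertibility is detected purely modulo $p$ and that uniformity is preserved under reduction — but both follow immediately from the fact that $p$ is the unique maximal ideal's generator and that the quotient map has uniform fibers; no serious obstacle is expected.
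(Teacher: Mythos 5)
The paper does not prove this lemma at all; it simply cites it as \cite[Theorem 2]{Ngu99}, so there is no in-paper argument to compare against. Your proof is the standard one and is correct and complete: the CRT factorization, the observation that invertibility over $\Z_{p^a}$ is detected modulo $p$ with uniform fibers (which is exactly why the formula depends only on the distinct primes $p_i$ and not on their multiplicities), and the count $|\mathrm{GL}_t(\F_p)| = \prod_{k=0}^{t-1}(p^t-p^k)$ together yield $P_s = \prod_{i=1}^m\prod_{j=1}^t(1-p_i^{-j})$ with no gaps.
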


It can be seen from the formula that the probability of a random $t\times t$ matrix being invertible converges to a constant for large dimension $t$. In Figure \ref{fig:PrOfInvertMatrix}, we give the probability to obtain a random nonsingular matrix with modulus $s \in \{2, \ldots, 300\}$ and a fixed dimension $t=200$. From the results, it can be seen that $P_s$ is non-negligible for this range of modulus.

\begin{figure}
\caption{The probability of invertible matrix over $\Z_s$}
\begin{tikzpicture}[auto, left]
\begin{axis}[
width=15cm,
height=8cm,
]

\addplot[only marks, blue] file {PlotData.data};
\end{axis}
\end{tikzpicture}
\label{fig:PrOfInvertMatrix}
\end{figure}


\section{Construction of Our Trapdoor Functions}\label{sec:trapdoor function}

In this section, we describe new trapdoor functions where inverting the function amounts to solving the CVP for the associated lattices. Unlike the GGH construction, we do not generate two different bases of a lattice. Instead, we require only one basis of our polynomial lattice as the trapdoor involves information to construct the polynomial lattice. Recall that a trapdoor function encompasses four different sub-algorithms, namely, {\em generate}, {\em sample}, {\em evaluate} and {\em invert}. We will now present each of these in detail.

{\bf Generate:}
Set the public parameters $q,n,d$ according to the desired security level (see the next section for details). Let $d = d_0t$. Choose $t$ irreducible polynomials $c_i(x)$ of degree $d_0$ and let $c(x) = c_1(x)\ldots c_t(x)$. Choose an ordered  set $\ba = (\alpha_1, \ldots, \alpha_n)$ such that $c(\alpha_i) \ne 0$ for $i=1, 2, \ldots, n$ and the elements $\alpha_{n-t+1}, \ldots, \alpha_n$ satisfy the conditions in Proposition \ref{basis_prop}. Construct the basis $B_{\ba,c(x)}$ of the lattice $\mL_{\ba,c(x)}$ as described in Proposition \ref{basis_prop}. Write $B_{\ba,c(x)} = \left(\begin{matrix}I_{n-t}&-G\\
0_{t\times (n-t)} & (q^{d_0}-1)I_t\end{matrix}\right).$

Let $H = B'_{\ba,c(x)} = \left(\begin{matrix}I_{n-t}, & -G\end{matrix}\right)$.
The trapdoor for our function includes the polynomial $c(x)$ and the ordered set $\ba = (\alpha_1, \ldots, \alpha_n)$.

{\bf Sample:}
Randomly sample $\vec{m} \in \mathbb{Z}_{q^{d_0}-1}^{n-t}$ and the error $\vec{e} = (e_1, \ldots, e_n)  \in \{0,1\}^n$ satisfying:
$\sum_{i=1}^ne_i = d-1$.

{\bf Evaluate:}
For each input $\vec{m} \in \Z_{q^{d_0}-1}^{n-t}$, the function $f$ is evaluated on $\vec{m}$ as
$$\vec{c}=f(\vec{m}, \vec{e}) = \vec{m}H + \vec{e} \mod q^{d_0}-1.$$

{\bf Invert:}
Suppose that we are given a valid output $\vec{c} = (c_1, \ldots, c_n)$ of the function $f$. The inversion process is as follows.
\begin{itemize}[itemindent=3em]

\item[{\rm Step 1:}] Compute
$$r(x) = \prod_{i=1}^n(x-\alpha_i)^{c_i} \mod c(x).$$

\item[{\rm Step 2:}] Factorize $r(x)$ as $r(x) = \prod_{i=1}^n(x-\alpha_i)^{u_i}$. Let $\vec{u} = (u_1, u_2, \ldots, u_n).$

\item[{\rm Step 3:}] Compute $\vec{v'} = \vec{c}-\vec{u}$. Write $\vec{v'} = (v_1', \ldots, v_n')$.

\item[{\rm Step 4:}] Let $\vec{m'}=(v_1', \ldots, v_{n-t}')$.
\end{itemize}

Without knowledge of the trapdoor, observe that inverting the function will require us to find the error $\vec{e}$ or equivalently, a point in $\mL_{\ba, c(x)}$ that is close to $\vec{c}$. Concretely, one will use the basis formed by the rows of the matrix $ \left(
\begin{array}{cc}
&\mbox{\smash{\textit{H}}}\\
0&(q^{d_0}-1)I_t
\end{array}
\right).$
Thus, one needs to be able to solve CVP with respect to this basis. We will discuss more about this in the next section.

The following theorem shows that the inversion process indeed recovers $\vec{m}$.

\begin{theorem}
\label{invert_thm}
Let $\vec{m}$ be a random element in $\Z_{q^{d_0}-1}^n$ and let $\vec{c}$ be the output produced by the {\bf Evaluate} algorithm. Let $\vec{m'}$ be the output of the {\bf Invert} algorithm. Then $\vec{m'}=\vec{m}$.
\end{theorem}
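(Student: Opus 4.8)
The plan is to track the {\bf Evaluate} computation through the group homomorphism $\phi_\ba$ and to exploit the degree restriction $\sum_i e_i = d-1$ built into the {\bf Sample} step. First I would record the structural facts about $\fQ^*_{c(x)}$: since $c(x) = c_1(x)\cdots c_t(x)$ with pairwise coprime irreducibles of degree $d_0$, we have $\fQ^*_{c(x)} \cong \bigoplus_{i=1}^t \F_{q^{d_0}}^*$, a group of exponent $q^{d_0}-1$, so $(x-\alpha_i)^{q^{d_0}-1} \equiv 1 \pmod{c(x)}$ for every $i$ (each $x-\alpha_i$ is a unit because $c(\alpha_i)\neq 0$). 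Writing $\vec{c} = \vec{m}H + \vec{e} + (q^{d_0}-1)\vec{k}$ over $\Z$, with $\vec{k}\in\Z^n$ absorbing the reduction modulo $q^{d_0}-1$, and using that $\vec{m}H$ is an integer combination of the rows of $B_{\ba,c(x)}$ and hence lies in $\mL_{\ba,c(x)} = \ker\phi_\ba$, I get
$$\phi_\ba(\vec{c}) = \phi_\ba(\vec{m}H)\,\phi_\ba(\vec{e})\,\phi_\ba\big((q^{d_0}-1)\vec{k}\big) = \phi_\ba(\vec{e}).$$
Since the polynomial $r(x)$ of Step 1 is by definition the degree-$(<d)$ representative of $\phi_\ba(\vec{c})$, this says $r(x) \equiv \prod_{i=1}^n (x-\alpha_i)^{e_i} \pmod{c(x)}$.

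Next I would argue that Step 2 recovers $\vec{u}=\vec{e}$ exactly. Here the sampling constraint $\sum_i e_i = d-1$ is essential: the polynomial $\prod_{i=1}^n(x-\alpha_i)^{e_i}$ has degree $d-1 < d = \deg c(x)$, and $r(x)$ also has degree $<d$, so two such polynomials that are congruent modulo $c(x)$ must be equal in $\fR$. Hence $r(x) = \prod_{i=1}^n(x-\alpha_i)^{e_i}$ as honest polynomials; because $\vec{e}\in\{0,1\}^n$ and $\alpha_1,\dots,\alpha_n$ are distinct, this is precisely the factorization of $r(x)$ into the linear forms $x-\alpha_i$, and unique factorization in $\fR$ forces the exponents $u_i$ returned in Step 2 to equal $e_i$ for all $i$. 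Thus $\vec{u}=\vec{e}$.

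Finally, Steps 3--4 are bookkeeping modulo $q^{d_0}-1$: $\vec{v'} = \vec{c}-\vec{u} = \vec{c}-\vec{e} \equiv \vec{m}H \pmod{q^{d_0}-1}$, and since $H = \left(\begin{matrix}I_{n-t}, & -G\end{matrix}\right)$ we have $\vec{m}H = (\vec{m},\,-\vec{m}G)$, whose first $n-t$ coordinates are $\vec{m}$; as $\vec{m}\in\Z_{q^{d_0}-1}^{n-t}$ is already reduced, the output $\vec{m'} = (v_1',\dots,v_{n-t}')$ of Step 4 equals $\vec{m}$. I expect the only genuinely delicate point to be the uniqueness argument in Step 2 --- making explicit that the degree bound $d-1$ forces the lift of $\phi_\ba(\vec{e})$ from $\fQ^*_{c(x)}$ back to $\fR$ to be the intended squarefree polynomial, so that factoring it is legitimate and unambiguous; the rest follows directly from $\phi_\ba$ being a homomorphism with kernel $\mL_{\ba,c(x)}$.
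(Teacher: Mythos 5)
Your proposal is correct and follows essentially the same route as the paper's proof: both use that $\vec{m}H$ lies in $\ker\phi_\ba$ so that $r(x)\equiv\prod_i(x-\alpha_i)^{e_i}\pmod{c(x)}$, then invoke the degree bound $\sum_i e_i=d-1<d$ to conclude $r(x)$ equals the error polynomial exactly, and finally read off $\vec{m}$ from the first $n-t$ coordinates. Your explicit handling of the reduction modulo $q^{d_0}-1$ via $(x-\alpha_i)^{q^{d_0}-1}\equiv 1$ and the unique-factorization remark are slightly more careful than the paper's write-up, but they are refinements of the same argument rather than a different one.
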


\begin{proof}
First, we have $\vec{c} = \vec{m}B_{\ba,c(x)}'+\vec{e}$. We claim that $\vec{v} = \vec{v'}$, where $\vec{v} = \vec{m}B_{\ba,c(x)}'$.
To see this, note that

\begin{align*}
\prod_{i=1}^n(x-\alpha_i)^{c_i} = & \prod_{i=1}^n(x-\alpha_i)^{v_i+e_i} = \prod_{i=1}^n(x-\alpha_i)^{v_i}\prod_{i=1}^n(x-\alpha_i)^{e_i} \\
\equiv & 1 \cdot \prod_{i=1}^n(x-\alpha_i)^{e_i} \mod c(x) \equiv \prod_{i=1}^n(x-\alpha_i)^{e_i} \mod c(x)
\end{align*}

Since $\sum_{i=1}^ne_i =d-1 < d$, we must have $r(x) = \prod_{i=1}^n(x-\alpha_i)^{e_i}$, so $u_i = e_i$.
Therefore, $\vec{v'}=\vec{c}-\vec{u}=\vec{m}B_{\ba, c(x)}'+\vec{e}-\vec{u}=\vec{v}$ and the claim is proved.

Note that we have $\vec{v} = \vec{m}B_{\ba, c(x)}' = (\vec{m}, -\vec{m}G)$. Therefore, we have $(v_1, \ldots, v_{n-t}) = \vec{m} \mod q^{d_0}-1$.
\end{proof}

\begin{remark}

\begin{itemize}
\item In general, we like to have as many nonzero entries of the error as possible. Hence, we choose $e_i$ to take small values. In particular, we typically let $e_i = 1$.

\item Instead of letting all the error entries be positive, we can equivalently let them be all negative. In this case, in the inversion process, one needs to check if $r(x)$ or $1/r(x) \mod c(x)$ can be factorized. In the former case, we have the usual case where $e_i \ge 0$. In the latter case, it is easy to verify that we have $e_i \le 0$, that is all the nonzero entries of the error are $-1$.

\item For the inversion process, one can simply check if $r(\alpha_i) = 0$ to check if $u_i=0$ or $1$.
\end{itemize}
\end{remark}



\begin{remark}
\begin{itemize}

\item Here, only the right part $-G$ of the matrix $H = \left(\begin{matrix}I_{n-t}, & -G\end{matrix}\right)$, which is used to evaluate the function, is undetermined. It is an $(n-t)\times t$ matrix over $\Z_{q^{d_0}-1}$ and thus, has size $(n-t)td_0\log_2q$ bits.

\item Unlike the GGH scheme, inversion does not require solving the CVP. Instead, inversion is carried out using properties of polynomials and remainders.

\item In the above scheme, the first $n-t$ positions of $\vec{c}$ may contain some information about $\vec{m}$. This is because we have only introduced error to $d-1$ positions, and thus, at least $n-t-d+1$ positions will be in the clear. In Section \ref{sec:practical_scheme}, we present a practical encoding scheme to mask the original message $m$.

\item Apart from the above scheme, other modifications are possible. Randomly pick an $(n-t)\times (n-t)$ unimodular matrix $T$ with small entries and an $n \times n$ permutation matrix $P$. Construct $H = TB_{\ba, c(x)}'P \mod {q^{d_0}-1}$. The left part of the new matrix $H$ will hide all the information of message $m$. In situations where the inputs are completely random, the roles of $T$ and $P$ will not be so critical.


\end{itemize}
\end{remark}

\section{Security Analysis of Our Trapdoor Functions}\label{sec:security}

In deciding the parameters for our scheme, we will like to achieve the following:
\begin{itemize}
\item The public key size should be reasonably small;
\item Key generation, encryption and decryption should be efficient;
\item The scheme is resistant against all existing attacks.
\end{itemize}

We now discuss some possible attacks on our scheme to help us decide the appropriate parameters. First, suppose that $d \ge n/2+1$. Let $\vec{c}$ be a valid output with error $\vec{e}$. Then, $\vec{e}$ has $d-1$ entries $=1$. Consider $\vec{c'}=\vec{c}-(1, \ldots, 1)$. It is $\vec{c'}=\vec{m}H+\vec{e} - (1,\ldots,1) = \vec{m}H+\vec{e'}$, where $\vec{e'}$ has $< n/2$ entries $= -1$. Hence, one may decrypt using $\vec{c'}$ instead. It follows that we may assume that $d \le n/2$. Therefore, we have $t \le d \le n/2$ and $n \le q$.

\subsection*{Error search}

At first glance, it appears that one needs to search through all ${n \choose d-1}$ entries to find the error. However, one can in fact reduce the search in the following way. It is obvious that the first $n-t$ columns of $H$ are linearly independent. Let $I = \{1, \ldots, {n-t}\}$. Search through all possible error positions in $I$. Recall that there are at most $d-1$ such positions. Assuming that the error bits are uniformly distributed, the number of nonzero error bits in these positions is roughly $l=\frac{(n-t)(d-1)}{t}$. Consequently, the number of tries is around ${n-t \choose l}$.

Let $\vec{c_I}$ denote the vector formed by the entries in $\vec{c}$ indexed by $I$. For each guess $\vec{e_I}$, define $\vec{x} = \vec{c_I}-\vec{e_I} \mod q^{d_0}-1$. To verify if our guess is correct, check if $\vec{c}-\vec{x}H$ is of the correct error form, that is, contains exactly $d-1$ $1$'s and all other entries are $0$.

Consequently, the complexity of this attack is $O({n-t \choose l}(n-t)t)$.

\subsection*{Search for the trapdoor}

One obvious way to attack the function is to find the trapdoor information. We will need to search for $c(x)$ and $\ba = (\alpha_1, \ldots, \alpha_n)$. One way to do this is as follows:

\begin{itemize}
\item Exhaustively search for the polynomial $c(x)$. There are $O(q^d)$ different $c(x)$ of degree $d$ of the form $c(x) = c_1(x)\ldots c_t(x)$.

\item For each $c(x)$, guess the ordered set $(\alpha_{n-t+1}, \ldots, \alpha_n)$. For each such set, determine if there exist $\alpha_1, \ldots, \alpha_{n-t}$ that satisfy the matrix $B_{\ba, c(x)}' = \left(\begin{matrix}I_{n-t}, & -G\end{matrix}\right)$. Let $-G=(b_{i, n-t+j})_{i=1,\ldots,t, j=1,\ldots, t}$. Specifically, from the definition of $\mL_{\ba, c(x)}$ and $B_{\ba, c(x)}'$, we can construct $\alpha_i$ by checking for $\alpha_i$ such that $(x-\alpha_i) \times \prod_{j=1}^t{(x-\alpha_{n-t+j})^{b_{i, n-t+j}}} \equiv 1 \pmod {c(x)}$ for each $i \in \{1, \ldots, n-t\}$. Our guesses of $c(x)$ and $\alpha_i$'s are correct if we can reconstruct $(\alpha_1, \ldots, \alpha_{n-t})$ by the preceding procedure. There are $\Perms{n}{n-t} \approx n^{n-t}$ possible ordered sets $(\alpha_{n-t+1}, \ldots, \alpha_n)$.
\end{itemize}

The overall complexity of this attack is $O(q^d n^{n-t} (n-t) d^3 (\log q)^3)$ if we assume that the complexity of polynomial multiplication in the ring $\F_q[x]/c(x)$ is $O(d^2 (\log q)^2)$.



\subsection*{Inverting the function via solving CVP}

Next, we discuss the effectiveness of inverting the trapdoor function by solving CVP. As mentioned in Section \ref{sec:background}, solving the CVP for a random lattice and a random target vector is hard. In our situation, the error vector is of a special form, namely, it contains exactly $d-1$ $1$'s or $-1$'s. We first investigate how well Babai's nearest plane algorithm works to recover the error.
In Table \ref{table:babai_experiment_result}, we give some experimental results when Babai's nearest plane algorithm is used to invert a random instance of our trapdoor function. In our experiments, we let $d=t$, which means that $c(x)$ is a product of linear polynomials. In addition, we let $q$ be the next prime number larger  than $n+d$. Before running Babai's nearest plane algorithm, we converted the basis to a BKZ-$\beta$ reduced basis, where $\beta$ is the block size involved \footnote{Every BKZ algorithm related experiment conducted in this paper is run under the SageMath software with parameter \texttt{proof=False}, which calls the fplll library.}. For each $(n, d)$ pair, we repeated the experiments $30$ times. The status \texttt{T} in Table \ref{table:babai_experiment_result} means that there is at least one successful inversion among the repeated experiments with the same set of parameters ($n$, $q$, $d$ and block size), while \texttt{F} means that no successful inversion was achieved. In  Table \ref{table:babai_experiment_result}, for each pair of $n$ and block size, we provide the largest value of $d$ that results in the status \texttt{F} and the smallest $d$ (which is necessarily the next value) that results in the status \texttt{T}.


From the experimental results, we see that, for any fixed $n$ and BKZ block size, the attack by Babai's nearest plane algorithm is more effective for larger $d$. On the other hand, for any fixed $d$ and BKZ block size, this attack becomes ineffective as $n$ increases. Finally, for fixed $n$ and $d$, one may increase the BKZ block size to attempt to invert the function. However, it appears that the impact is minimal when the BKZ block size is increased beyond a certain bound for each fixed $(n,d)$ pair. In particular, for $n \ge 200$, our results suggest that Babai's algorithm will not be effective for practical block sizes when $25\leq d\leq 40$.

\begin{table}[ht!]
\centering
\caption{Experimental results on Babai's algorithm to invert the trapdoor function}
\begin{tabular}{|c|cc|cc|cc|cc|cc|cc|cc|cc|cc|}
\hline
status     &F &T &F &T &F  &T  &F  &T  &F  &T  &F  &T  &F  &T  &F  &T  &F  &T   \\ \hline
$n$        &80&80&80&80&100&100&100&100&100&100&100&100&100&100&100&100&100&100 \\ \hline
$d$        &26&27&24&25&39 &40 &33 &34 &30 &31 &30 &31 &28 &29 &28 &29 &28 &29  \\ \hline
block size &20&20&30&30&20 &20 &30 &30 &40 &40 &45 &45 &50 &50 &55 &55 &60 &60  \\ \hline
\end{tabular}
\label{table:babai_experiment_result}
\end{table}



\subsection*{Embedding attack to find error} \label{sec:embedding_attack}

In \cite{Ngu99}, by exploiting the structure of the errors of the GGH scheme, the embedding attack was employed to break the scheme with $n$ up to $350$. It was suggested that the embedding attack is effective whenever the gap between the minimum norm of $L$ and the error norm is too big. Extensive experiments were carried out in \cite{GN08} to analyze the effectiveness of the embedding attack with respect to this gap. It was proposed that in order for the attack to be effective via BKZ algorithms,  one should choose a block size with corresponding $\delta$ satisfying $\lambda_1(L)/{\rm error norm} > \epsilon \delta^n$, where $\delta$ is the root Hermite factor and $\epsilon$ is some small constant. The value of $\epsilon$ is not known for a random lattice. In \cite{GN08}, experiments were carried out on semi-orthogonal lattices as well as knapsack lattices with both the LLL and BKZ-$20$ algorithms. We performed extensive experiments to estimate an appropriate value of $\epsilon$ for our lattices.
In Appendix \ref{appendix:more_embedding_attack_data}, we provide our experimental results that guide us to choose a suitable $\epsilon$.

Once $\epsilon$ is fixed, one can resist the embedding attack by choosing the parameters so that the $\delta$ required to launch a successful attack will be infeasible to achieve.  In our situation, with $\lambda_1(\mL_{\ba,c(x)})$ estimated by the Gaussian heuristic, we have

\begin{equation*}
\begin{aligned}
\sqrt{n/(2\pi e(d-1))}(q^{d_0}-1)^{t/n} \le \epsilon\delta^n.
\end{aligned}
\label{eq:embedding_attack_formula}
\end{equation*}


As in the attack via Babai's algorithm, we carried out some experiments to investigate how well our lattices and errors can withstand the embedding attack. First, we performed experiments to compute the root Hermite factor of our bases for different BKZ block sizes. Once again, we let $t=d$ and $q$ be the next prime larger than $n+d$. We set $n \in \{149, 150, 151\}$ and $d \in \{70, \ldots, 80\}$. For each $(n, d)$ pair, we also repeated the experiments $30$ times. Then we picked the smallest value of $\delta$ as the root Hermite factor indicated in Table \ref{table:expertment_block_size_delta}.

\begin{table}[!ht]
\centering
\caption{Experiments on block size and root Hermite factor $\delta$}
\begin{tabular}{|c|c|c|c|c|c|c|c|c|c|}
\hline
block size & 20      & 30      & 40      & 50      & 55      & 60      & 65      & 70 \\ \hline
$\delta$   & 1.01168 & 1.01135 & 1.01119 & 1.01098 & 1.01007 & 1.00987 & 1.00934 & 1.00902 \\ \hline
\end{tabular}
\label{table:expertment_block_size_delta}
\end{table}

Next, we present our experimental results on the embedding attack. In these experiments, we fix $n = 150$ and let the BKZ block size $\beta$ vary. Our choices of $d$ and $q$ are identical to those in the previous experiments. For each instance, we also repeated the experiments $30$ times. One successful embedding attack represents the status \texttt{T} in the first column. Otherwise, we label the status as \texttt{F}. In these experiments, we find the largest value of  $d$ that can resist the embedding attack as indicated in Table \ref{table:embedding_attack_result}. Our results show that as the block size increases, the maximum value of $d$ that can resist the embedding attack decreases. However, increasing the block size will involve a much longer basis reduction time. Using the  experimental value of  $\delta$ obtained in Table \ref{table:expertment_block_size_delta}, we compute the corresponding $\epsilon$ from the embedding attack formula in the last column.


\begin{table}[ht!]
\centering
\caption{Embedding attack experimental results}
\begin{tabular}{|c|p{0.8cm}<{\centering}|p{0.6cm}<{\centering}|p{0.8cm}<{\centering}|c|c|c|}
\hline
status & $n$&$d$ &$q$& block size &$\delta$&experimental $\epsilon$ \\ \hline
F      & 150& 66 & 223 & 20         &1.01168& 0.69386 \\
T      & 150& 67 & 223 & 20         &1.01168& 0.71383 \\ \hline
F      & 150& 58 & 211 & 30         &1.01135& 0.57091 \\
T      & 150& 59 & 211 & 30         &1.01135& 0.58651 \\ \hline
F      & 150& 55 & 211 & 40         &1.01119& 0.53973 \\
T      & 150& 56 & 211 & 40         &1.01119& 0.55421 \\ \hline
F      & 150& 50 & 211 & 50         &1.01098& 0.48910 \\
T      & 150& 51 & 211 & 50         &1.01098& 0.50176 \\ \hline
F      & 150& 44 & 197 & 55         &1.01007& 0.47288 \\
T      & 150& 45 & 197 & 55         &1.01007& 0.48421 \\ \hline
F      & 150& 43 & 197 & 60         &1.00987& 0.47586 \\
T      & 150& 44 & 197 & 60         &1.00987& 0.48713 \\ \hline
F      & 150& 42 & 197 & 65         &1.00934& 0.50307 \\
T      & 150& 43 & 197 & 65         &1.00934& 0.51484 \\ \hline
F      & 150& 39 & 191 & 70         &1.00902& 0.48913 \\
T      & 150& 40 & 193 & 70         &1.00902& 0.50141 \\ \hline
\end{tabular}
\label{table:embedding_attack_result}
\end{table}

In the asymptotic case, in the survey paper of \cite{APS15}, a general relationship between $\delta$ and  block size $\beta$ is given as $\delta \approx \beta^{1/2\beta}$. In addition, the time complexity to run the BKZ algorithm is estimated by the following result.

\begin{proposition}
\label{BKZ_prop}
The log of the time complexity for running BKZ to achieve a root Hermite factor $\delta$ is:
\begin{itemize}
    \item $\Omega(\frac{\log^{2}(\log\delta)}{\log^{2}\delta})$ if calling the SVP oracle costs $2^{\mathcal{O}(\beta^{2})}$,
    \item $\Omega(\frac{-\log(\frac{-\log\log\delta}{\log\delta}) \log\log\delta}{\log\delta})$ if calling the SVP oracle costs $\beta^{\mathcal{O}(\beta)}$,
    \item $\Omega(\frac{-\log\log\delta}{\log\delta})$ if calling the SVP oracle costs $2^{\mathcal{O}(\beta)}$.
\end{itemize}
\end{proposition}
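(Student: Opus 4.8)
The plan is to reduce the statement to a purely asymptotic computation by combining two ingredients. The first is the standard structural fact that a single run of BKZ with block size $\beta$ on an $n$-dimensional lattice performs only $\mathrm{poly}(n)$ calls to an SVP oracle on rank-$\beta$ projected sublattices (see, e.g., \cite{CN11,APS15}), so the overall running time is $\mathrm{poly}(n)$ times the cost of that oracle and hence $\log(\text{time}) = \log(\text{oracle cost}) + O(\log n)$. The second is the model relation $\delta\approx\beta^{1/(2\beta)}$ recalled just above; since $\beta\mapsto\beta^{1/(2\beta)}$ is decreasing for $\beta>e$, attaining root Hermite factor $\delta$ forces the block size to be at least the solution $\beta(\delta)$ of $\beta^{1/(2\beta)}=\delta$. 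Combining the two, under the three hypotheses on the oracle cost we obtain $\log(\text{time})=\Omega(\beta(\delta)^{2})$, $\Omega(\beta(\delta)\log\beta(\delta))$ and $\Omega(\beta(\delta))$ respectively, so it suffices to solve $\beta^{1/(2\beta)}=\delta$ for $\beta$ as $\delta\to1^{+}$ and substitute.

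First I would rewrite the relation in logarithmic form as $2\beta\log\delta=\log\beta$, noting that $\delta\to1^{+}$ forces $\beta\to\infty$. Taking logarithms a second time gives $\log\log\delta=\log\log\beta-\log(2\beta)$, i.e. $-\log\log\delta=\log\beta+\log2-\log\log\beta$. Since $\beta\to\infty$ we have $\log\log\beta=o(\log\beta)$, so $\log\beta\sim-\log\log\delta$; feeding this back into $\beta=\log\beta/(2\log\delta)$ yields $\beta\sim\frac{-\log\log\delta}{2\log\delta}$. Writing $Y:=\frac{-\log\log\delta}{\log\delta}$, we get $\beta=\Theta(Y)$, and since $\log Y=\log(-\log\log\delta)-\log\log\delta\sim-\log\log\delta$ as well, also $\log\beta=\Theta(\log Y)$.

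It then remains to substitute. This gives $\beta^{2}=\Theta(Y^{2})$, $\beta\log\beta=\Theta(Y\log Y)$ and $\beta=\Theta(Y)$; unwinding $Y$, and using the elementary identity $\log\log\delta=-Y\log\delta$ to put $Y\log Y$ in the printed form, these become
\[ \beta^{2}=\Theta\!\left(\frac{\log^{2}(\log\delta)}{\log^{2}\delta}\right),\qquad \beta\log\beta=\Theta\!\left(\frac{-\log\!\left(\frac{-\log\log\delta}{\log\delta}\right)\log\log\delta}{\log\delta}\right),\qquad \beta=\Theta\!\left(\frac{-\log\log\delta}{\log\delta}\right), \]
and combining with the first paragraph delivers the three claimed $\Omega$-bounds on $\log(\text{time})$.

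The hard part is the asymptotic inversion in the middle step: $2\beta\log\delta=\log\beta$ is a Lambert-$W$-type transcendental equation, so the bootstrap $\log\log\beta=o(\log\beta)$ and the control of the lower-order terms must be handled carefully — this is exactly what pins down the precise shape (and not merely the order of magnitude) of the three expressions. A secondary point worth stating explicitly is that $\delta\approx\beta^{1/(2\beta)}$ is itself a heuristic prediction about BKZ output quality (the ``standard model''/geometric-series-assumption regime of \cite{APS15}), so the proposition should be read as holding within that model rather than unconditionally.
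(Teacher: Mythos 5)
The paper itself gives no proof of this proposition: it is imported verbatim from the survey \cite{APS15}, stated immediately after the heuristic relation $\delta\approx\beta^{1/2\beta}$. Your reconstruction is the standard derivation behind that cited result and the asymptotic inversion is correct: from $2\beta\log\delta=\log\beta$ one gets $-\log\log\delta=\log\beta+\log 2-\log\log\beta\sim\log\beta$, hence $\beta\sim\frac{-\log\log\delta}{2\log\delta}$, and the substitution into the three oracle-cost regimes (including the identity $\log\log\delta=-Y\log\delta$ used to recover the printed form of the middle bound) checks out. Two caveats are worth making explicit. First, as stated the hypotheses give \emph{upper} bounds on the oracle cost ($2^{\mathcal{O}(\beta^2)}$, etc.) while the conclusions are \emph{lower} bounds $\Omega(\cdot)$ on the log-time; a lower bound on total time only follows if the oracle cost is read as $2^{\Theta(\beta^2)}$ (or at least $2^{\Omega(\beta^2)}$) together with the fact that block size $\beta(\delta)$ is necessary to reach $\delta$. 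This mismatch is inherited from the original statement, and your phrasing silently adopts the intended $\Theta$ reading — it would be cleaner to say so. Second, your closing remark that the whole proposition is conditional on the heuristic model $\delta\approx\beta^{1/2\beta}$ is exactly right and is the reason the source presents this as an estimate rather than a theorem.
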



When $n$ is large, we will like to have $d \approx n/2$. Let $q$ to be slightly bigger than $n+d$, say $q \approx 2n$. In order for $\delta$ to satisfy $\sqrt{n/2\pi e(d-1)}q^{d/n} > \epsilon\delta^n$, we have
$$\delta < (2n/\pi e\epsilon^2)^{1/2n}.$$

This gives $\delta$ close to $1$ when $n$ is large and consequently, we need a block size close to $n$.

For smaller values of $n$, \cite[Table 2]{CN11} gives some estimates for $\delta$ corresponding to $\beta \le 250$ achieved using their BKZ2.0 algorithm. In addition, they provided time estimates to run the algorithm by measuring the enumeration cost \cite[Table 3, 4]{CN11} to run the SVP sub-routine and the number of BKZ rounds required. 
Note that the total cost of BKZ is estimated to be $(n-1)*{\rm number of rounds}*{\rm enumeration cost}$. In the following section, we will use this method to give the estimated cost for some parameters.



\subsection*{Enhanced embedding attack}

The embedding attack can be enhanced by combining with partial search of the error bits. Specifically, if $k$ nonzero error bits are guessed correctly, the remaining error norm will be reduced to $\sqrt{d-k}$, thereby making the gap from $\lambda_1(\mL_{\ba,c(x)})$ bigger. Concretely, we have the new gap as $\sqrt{n/2\pi e (d-1-k))}(q^{d_0}-1)^t$. This in turn may reduce the BKZ block size needed to launch the embedding attack. Hence, we need to ensure that ${n \choose k}$ times of each single BKZ execution will be infeasible to carry out.

We will illustrate the enhanced embedding attack by a concrete example by referring to our parameter choices in the next section. Suppose that the adversary has correctly guessed $k=d/2$ errors for the first row of the practical parameters we present in Table \ref{table:estimated_cost}. Then $n=230, t=d=29, k=15, d_0=1, q=263$. Now, the adversary needs to recover the remaining error bits via the embedding attack. This requires the adversary to run a BKZ algorithm with $\delta=1.0084$ for the block size $\beta \approx 133$, which corresponds to the cost $2^{170.24}$ by the method in Section \ref{sec:embedding_attack}. In fact, the cost of the adversary to correctly guessed the $k$ errors is ${{n-d}\choose{k}} \approx 2^{73.7}$. Hence, the enhanced embedding attack is infeasible to carry out for our practical parameters. Using a similar argument, One can check that this approach does not work for the remaining proposed parameters as well.

\subsection*{Other attacks to find the trapdoor information}

Note that with knowledge of the public information $G$, one can easily construct the matrix $B_{\ba,c(x)}$. The question is whether this matrix will leak information about the polynomial $c(x)$ as well as $\ba$. Here, we discuss a possible attack when $c(x)$ is irreducible over $\F_q$, that is, $t=1$.

In this case, the matrix $B_{\ba,c(x)}$ takes a very simple form, namely,

$$B_{\ba,c(x)} = \left(\begin{matrix}I_{n-1}&-G\\
0_{1\times (n-1)}&q^d-1\end{matrix}\right),$$
where $G$ is a $(n-1) \times 1$ column. Write $G$ as $G = \left(\begin{matrix}g_1\\g_2\\\vdots\\g_{n-1}\end{matrix}\right),$ where each $g_i \in \Z_{q^d-1}$.
Note that $g_i$ satisfies $x-\alpha_i \equiv (x-\alpha_n)^{g_i} \mod c(x)$. Without any loss of generality, we may assume that $\alpha_n=0$ (by substituting $x$ by $x-\alpha_n$ in the whole system). It follows that $c(x)$ is a common factor of the polynomials $x-\alpha_i - x^{g_i}$, $i=1, 2, \ldots, n-1$. Since $c(x)$ is irreducible over $\F_q$ of degree $d$, it is a factor of $x^{q^d}-x$.

We can now perform the following steps to recover $c(x)$ and the $\alpha_i$'s.

\begin{itemize}
\item Randomly guess $\alpha_1$. For each $\alpha_1$, compute the gcd $h(x) = \gcd(x-\alpha_1-x^{g_1}, x^{q^d}-x)$. Find all pairs $\alpha_1,d(x)$ such that $d(x)$ is irreducible of degree $d$ and divides the polynomial $h(x)$.

\item For each pair $\alpha_1, d(x)$ found above, test for $\alpha_2$ such that $d(x)$ is also a factor of $x-\alpha_2-x^{g_2}$.

\item Continue the process until one $d(x)$ is left. Let $c(x)$ to be this $d(x)$.

\item Find the remaining $\alpha_i$'s by direct computation of $\alpha_i = x-x^{g_i} \mod d(x)$.
\end{itemize}

We remark that with high probability, the set of possible $d(x)$ after the first step will be very small. It follows that the main complexity of the above attack comes from performing the gcd computations to find gcd of polynomials of the form $x-\alpha -x^g$ and $x^{q^d}-x$. In general, such a gcd computation has complexity polynomial in $g$. Furthermore, with high probability, $g$ is of the order of $q^d$. Consequently, in general, the above attack has complexity polynomial in $q^d$. However, the above attack works if $g$ is small or is of a special form that makes the gcd computation easy.

The above attack easily generalizes to the case when $t > 1$ but the complexity increases as well. Specifically, we will need to guess $t$ different values of $\alpha$ in the first step. This has complexity $n!/(n-t)! \approx n^t$. In view of these considerations, we will choose $t$ to be as big as possible, say $c(x)$ is a product of linear or quadratic polynomials.

\begin{remark}
Apart from Babai's algorithm, one may use enumeration methods with pruning to solve CVP. Our preliminary experiments showed that pruning techniques do not have a great advantage over Babai's algorithm for our lattices, particularly when $n \ge 200$.
\end{remark}

\section{A Practical Encryption Scheme}

\subsection{Description} \label{sec:practical_scheme}

Similar to the GGH encryption scheme and the McEliece encryption scheme, in order to transit from the one-way trapdoor function to an encryption scheme, one needs a method to encode the message before parsing to the trapdoor function. In particular, the chosen encoding scheme should ensure that the encryption scheme is semantically secure. Different proposals were presented in \cite{GGH97,KH01} to achieve semantic security for the GGH scheme and the McEliece scheme. We first show why the scheme employed in \cite{GGH97} will not work for our construction.

Recall that for the GGH scheme, it was suggested to encode the plaintext bits as the least significant bits of the input message to the trapdoor function and the other bits are allowed to be picked randomly. We now show how this will make our scheme vulnerable to the related message attack.

To this end, let $\vec{p}$ be an $(n-t)$-bit plaintext to be encrypted. Suppose that $\vec{p}$ is encrypted twice, that is, encoded into $\vec{m_1}$ and $\vec{m_2}$ with $\vec{p}$ occupying the least significant bits of $\vec{m_1}$ and $\vec{m_2}$. This gives $\vec{c_1}=\vec{m_1}H+\vec{e_1} \mod q^{d_0}-1$ and $\vec{c_2}=\vec{m_2}H+\vec{e_2} \mod q^{d_0}-1$. Summing up, this yields $(\vec{m_1}+\vec{m_2})H+\vec{e_1}+\vec{e_2} \mod q^{d_0}-1 = \vec{c_1}+\vec{c_2}$. If $q$ is odd, we can consider the equation modulo $2$ to get $0\cdot H +\vec{e_1}+\vec{e_2} = \vec{c_1}+\vec{c_2} \mod 2$ or $\vec{e_1}+\vec{e_2} \mod 2 = \vec{c_1}+\vec{c_2}$. If $d$ is small relative to $n$, the number of entries which are $1$ in both $\vec{e_1}$ and $\vec{e_2}$ will be very small. Hence, we can guess the positions in which $\vec{e_1}$ or $\vec{e_2}$ is $1$ from the non-zero entries in $\vec{c_1}+\vec{c_2} \mod 2$ and use the attacks in Section \ref{sec:security} to recover $\vec{m}$.

In view of the above, we modify the encoding scheme to work for our trapdoor function. Suppose that the parameters $q,n,d,t$ are fixed. Our input to our trapdoor function is a vector in $\Z_{q^{d_0}-1}^{n-t}$. Thus, each entry is an $s$-bit string, where $s = 1 + \lfloor \log_2(q^{d_0}-1)\rfloor.$ We will encode an $(n-t)$-bit plaintext message $\mathcal{P}$ into the input $\vec{m}$ for the trapdoor function $f$. The ciphertext will be the output of $f$ in $\Z_{q^{d_0}-1}^{n}$. The entire encryption and decryption processes are described as follows.

Let $\vec{m} = (m_1, \ldots, m_{n-t})$ be in $\Z_{q^{d_0}-1}^{n-t}$ and let $m_i^{(j)}$ denote the $j$-th least significant bit of $m_i$, $j=0, 1, \ldots, s-1$. Further, let $\vec{m^{(j)}} = (m_1^{(j)}, \ldots, m_{n-t}^{(j)})$. Suppose the plaintext message is $\mathcal{P} = (\mathcal{P}_1, \ldots, \mathcal{P}_{n-t})$. In the following, let \texttt{Hash} denote a cryptographic hash function from $\{0,1\}^*$ to $\{0,1\}^{n-t}$. Let $f$ be the trapdoor function with all the notations in Section \ref{sec:trapdoor function}.

{\bf Private key:}
The degree $d$ polynomial $c(x)$, the $n$ elements $\alpha_1, \ldots, \alpha_n$, the unimodular matrix $T$ and the permutation matrix $P$.

{\bf Public key:}
The parameters $n, q, d, t$ and the matrix $-G$.

{\bf Encryption:}
\begin{itemize}
\item Randomly select $n-t$ bits $\vec{z}=(z_1, \ldots, z_{n-t})$.
\item Randomly select the error string $\vec{e} = (e_1, \ldots, e_n)$ satisfying the desired properties.
\item Set $\vec{m^{(0)}} = \mathcal{P} \oplus \vec{z}$.
\item Set $\vec{m^{(1)}} = \vec{z}$.
\item Set $\vec{m^{(2)}} = \texttt{Hash}(\mathcal{P}||\vec{z}||\vec{e})$.
\item For $j=3, \ldots, s-1$, set $\vec{m^{(j)}}$ randomly.
\item Let $H=\left(\begin{matrix}I_{n-t}, & -G\end{matrix}\right)$. Then, the ciphertext $\vec{c}$ is $\vec{c} = f(\vec{m}, \vec{e}) = \vec{m}H+\vec{e}$.
\end{itemize}

{\bf Decryption:}
Given a ciphertext $\vec{c}$, the decryption proceeds as follows:
\begin{itemize}
\item Compute $\vec{m} = f^{-1}(\vec{c})$ using the private key. Let $\vec{e}$ be the corresponding error. If $\vec{e}$ contains only $0$ or $1$ entries with exactly $d-1$ $1$'s, then continue. Otherwise, decryption fails.
\item Write $\vec{m} = (m_i^{(j)})_{i=1, \ldots, n-t, j = 0, 1, \ldots, s-1}$.
\item Set $\mathcal{P}' = \vec{m^{(0)}} \oplus \vec{m^{(1)}}$.
\item If ${\rm Hash}(\mathcal{P}'||\vec{m^{(1)}}||\vec{e}) = \vec{m^{(2)}}$, then $\mathcal{P} = \mathcal{P}'$ and the decryption is successful. Otherwise, decryption fails.
\end{itemize}

\begin{remark}
\begin{itemize}
\item Like the GGH scheme \cite{GGH97}, we encode our plaintext bits in the least significant bits of the input to our trapdoor function.

\item In our scheme, the input includes not only the plaintext bits but the error bits and random bits as well. By including the error bits to the input, changing bits of the ciphertext will likely make the decryption process fail. This helps to prevent reactive attacks where attackers try to guess the error bits by sending modified ciphertexts.

\item Similar to the conversion schemes suggested in \cite{KH01} for the McEliece encryption scheme, random bits and the hash of the plaintext bits are added to the input to ensure semantic security and to prevent other attacks such as related message attacks.
\end{itemize}
\end{remark}

\subsection{Choosing the Parameters} \label{sec:parameter_section}

In view of the attacks presented in Section \ref{sec:security}, we will choose the parameters $q,n,d,t$ to resist all the possible attacks. Concretely, the following choices will be made.

\begin{itemize}
\item We let $t=d$, that is, $c(x)$ is a product of linear polynomials.
\item We set $q$ to be the smallest prime bigger than $n+d$.
\item We set $d$ to satisfy $20 \le d \le n/2$.
\item For a security level $\lambda$, we set $n$ and $d$ so that ${n -d \choose l} \ge 2^\lambda$, where $l = \frac{(n-d)(d-1)}{n}$.
\end{itemize}

With $d=t$, we have $d_0=1$ so all our operations are done modulo $q-1$. Our public key size is $(n-t)t(1+\lfloor \log_2(q-2)\rfloor)$ bits. Since encryption only involves matrix multiplication modulo $q-1$, encryption is efficient with complexity $O(n^2)$.

We now provide possible sets of values of $n$ and $d$ for our encryption scheme. For each pair of $n$ and $d$, we compute the biggest $\delta$ such that $\sqrt{\frac{n}{2\pi e(d-1)}} q^{d/n} \le \epsilon*\delta^n$, where $\epsilon=0.3$ (as explained in the appendix). We then give the corresponding approximate BKZ block size $\beta$ to achieve this $\delta$ as well as a lower bound on the estimated cost. In particular, our lower bound on the cost of the embedding attack is computed as $nE$, where $E$ is the estimated enumeration cost for one sub-routine given in \cite[Table 3, 4]{CN11} corresponding to the block size $\beta_0$, with $\beta_0$ being the largest block size smaller than $\beta$ available in the tables.

\begin{table}[ht!]
\centering
\caption{Possible $n$ and $d$ for Practical Encryption Scheme}
\begin{tabular}{|c|c|c|c|c|c|c|c|}
\hline
$n$ &$d$ & $q$ & $\log_2{n-d \choose l}$ & $\delta$ & approximate block size & $\log_2$(estimated cost) & public key size \\ \hline
230 & 29 & 263 &     106          & 1.0067    & 168   &  225.95   &52461bits=6.40KB \\
230 & 30 & 263 &     108          & 1.0067    & 168   &  225.95   &54000bits=6.59KB \\
240 & 29 & 271 &     108          & 1.0064    & 168   &  227.72   &55071bits=6.72KB \\
240 & 30 & 271 &     110          & 1.0064    & 168   &  227.72   &56700bits=6.92KB \\
240 & 31 & 277 &     113          & 1.0065    & 168   &  227.72   &58311bits=7.12KB \\
240 & 32 & 277 &     113          & 1.0065    & 168   &  227.72   &59904bits=7.31KB \\
240 & 33 & 277 &     115          & 1.0065    & 168   &  227.72   &61479bits=7.50KB \\ \hline
260 & 29 & 293 &     111          & 1.0059    & 216   &  356.19   &60291bits=7.36KB \\
260 & 30 & 293 &     114          & 1.0059    & 216   &  356.19   &62100bits=7.58KB \\
260 & 31 & 293 &     117          & 1.0059    & 216   &  356.19   &63891bits=7.80KB \\
260 & 32 & 293 &     119          & 1.0060    & 216   &  356.19   &65664bits=8.02KB \\ \hline
\end{tabular}
\label{table:estimated_cost}
\end{table}


\appendix
\section{More Experimental Results on the Embedding Attack} \label{appendix:more_embedding_attack_data}
As in Table \ref{table:embedding_attack_result}, we provide more experimental data on employing the embedding attack to our trapdoor function in Table \ref{table:more_embedding_attack_result}. In these experiments, we vary $n$ as well as the BKZ block sizes and record the $\epsilon$ that results in a successful attack after $14$ tries. From our results, there does not seem to be a discernible trend for the constant $\epsilon$. Nonetheless, we see that the minimal $\epsilon$ withstanding the embedding attack is $0.47992$. In our selection of parameters for the encryption scheme given in Table \ref{table:estimated_cost}, we therefore use $\epsilon = 0.3 < 0.47992$ to guide us in choosing the appropriate $d$ to withstand the embedding attack.

\begin{table}[ht!]
\centering
\caption{More embedding attack experimental result}
\begin{tabular}{|c|p{0.8cm}<{\centering}|p{0.6cm}<{\centering}|p{0.8cm}<{\centering}|c|c|c|}
\hline
status & $n$& $d$& $q$ & block size & $\delta$ &experiment $\epsilon$ \\ \hline
F      & 100& 27 & 223 & 30         &1.01135& 0.57134 \\
T      & 100& 28 & 223 & 30         &1.01135& 0.58862 \\ \hline
F      & 110& 33 & 149 & 30         &1.01135& 0.58050 \\
T      & 110& 34 & 149 & 30         &1.01135& 0.59820 \\ \hline
F      & 120& 37 & 163 & 30         &1.01135& 0.54738 \\
T      & 120& 38 & 163 & 30         &1.01135& 0.56332 \\ \hline
F      & 130& 44 & 179 & 30         &1.01135& 0.56039 \\
T      & 130& 45 & 179 & 30         &1.01135& 0.57651 \\ \hline
F      & 140& 52 & 197 & 30         &1.01135& 0.58647 \\
T      & 140& 53 & 197 & 30         &1.01135& 0.60312 \\ \hline
F      & 150& 57 & 211 & 30         &1.01135& 0.55582 \\
T      & 150& 58 & 211 & 30         &1.01135& 0.57092 \\ \hline
F      & 160& 66 & 229 & 30         &1.01135& 0.58584 \\
T      & 160& 67 & 229 & 30         &1.01135& 0.60145 \\ \hline\hline
F      & 200&106 & 311 & 20         &1.01168& 0.68466 \\
T      & 200&107 & 311 & 20         &1.01168& 0.70125 \\ \hline
F      & 200& 95 & 307 & 30         &1.01135& 0.55993 \\
T      & 200& 96 & 307 & 30         &1.01135& 0.57315 \\ \hline
F      & 200& 94 & 307 & 40         &1.01119& 0.56464 \\
T      & 200& 95 & 307 & 40         &1.01119& 0.57793 \\ \hline
F      & 200& 86 & 293 & 50         &1.01098& 0.47992 \\
T      & 200& 87 & 293 & 50         &1.01098& 0.47992 \\ \hline
F      & 200& 80 & 283 & 55         &1.01007& 0.49576 \\
T      & 200& 81 & 283 & 55         &1.01007& 0.50674 \\ \hline
F      & 200& 79 & 281 & 60         &1.00987& 0.50323 \\
T      & 200& 80 & 283 & 60         &1.00987& 0.51579 \\ \hline
F      & 200& 77 & 281 & 65         &1.00934& 0.53522 \\
T      & 200& 78 & 281 & 65         &1.00934& 0.54692 \\ \hline\hline
F      & 220&123 & 347 & 20         &1.01168& 0.66355 \\
T      & 220&124 & 347 & 20         &1.01168& 0.67864 \\ \hline
F      & 220&115 & 337 & 30         &1.01135& 0.58720 \\
T      & 220&116 & 347 & 30         &1.01135& 0.60966 \\ \hline
F      & 220&113 & 337 & 40         &1.01119& 0.58180 \\
T      & 220&114 & 337 & 40         &1.01119& 0.59474 \\ \hline
F      & 220&111 & 337 & 50         &1.01098& 0.58286 \\
T      & 220&112 & 337 & 50         &1.01098& 0.59578 \\ \hline
F      & 220&109 & 331 & 55         &1.01007& 0.67414 \\
T      & 220&110 & 337 & 55         &1.01007& 0.69519 \\ \hline
F      & 220&107 & 331 & 60         &1.00987& 0.67427 \\
T      & 220&108 & 331 & 60         &1.00987& 0.68903 \\ \hline\hline
\end{tabular}

\label{table:more_embedding_attack_result}
\end{table}

\bibliographystyle{plain}
\bibliography{QuantumSafeCrypto}

\end{document}